\newcommand{\N}{\mathbb{N}}
\newcommand{\R}{\mathbb{R}}
\newcommand{\E}{\mathbb{E}}
\newcommand{\PB}{\mathbb{P}}
\newcommand{\1}{\mathds{1}}
\def\1{\mathds{1}}
\newcommand{\FWER}{\textsc{fwer}}
\newcommand{\floor}[1]{\left\lfloor #1 \right\rfloor}
\newlength{\leftstackrelawd}
\newlength{\leftstackrelbwd}
\def\leftstackrel#1#2{\settowidth{\leftstackrelawd}%
{${{}^{#1}}$}\settowidth{\leftstackrelbwd}{$#2$}%
\addtolength{\leftstackrelawd}{-\leftstackrelbwd}%
\leavevmode\ifthenelse{\lengthtest{\leftstackrelawd>0pt}}%
{\kern-.5\leftstackrelawd}{}\mathrel{\mathop{#2}\limits^{#1}}}
\theoremstyle{plain} 
\newtheorem{theorem}{Theorem}[section]
\newtheorem{lemma}[theorem]{Lemma}
\theoremstyle{definition}
\newtheorem{remark}[theorem]{Remark}
\newtheorem{definition}[theorem]{Definition}
\newtheorem{example}[theorem]{Example}
\newtheorem{assumption}[theorem]{Assumption}
\title{Asymptotic online FWER control for dependent test statistics}
\author{Vincent Jankovic\\
	Competence Center for Clinical Trials Bremen\\
	University of Bremen\\
	\texttt{jankovic@uni-bremen.de} \\
	\And
	Lasse Fischer\\
	Competence Center for Clinical Trials Bremen\\
	University of Bremen\\
	\texttt{fischer1@uni-bremen.de} \\
	\AND
	Werner Brannath\\
	Competence Center for Clinical Trials Bremen\\
	University of Bremen\\
	\texttt{brannath@uni-bremen.de}
}
\begin{document}
\maketitle

\begin{abstract}
    In online multiple testing, an a priori unknown number of hypotheses are tested sequentially, i.e., at each time point, a test decision for the current hypothesis has to be made using only the data available so far. Although many powerful test procedures have been developed  for online error control in recent years, most of them are designed solely for independent or at most locally dependent test statistics. In this work, we provide a new framework for deriving online multiple test procedures that ensure asymptotical (with respect to the sample size) control of the familywise error rate, regardless of the dependence structure between test statistics. In this context, we give a few concrete examples of such test procedures and discuss their properties. Furthermore, we conduct a simulation study in which the type I error control of these test procedures is also confirmed for a finite sample size, and a gain in power is indicated.
\end{abstract}

\keywords{asymptotics \and dependent test statistics \and familywise error rate \and online multiple testing}

\section{Introduction}

\label{section_introduction}

In times when ever larger amounts of data become more readily available, there is a growing interest in being able to carry out numerous statistical tests as quickly as possible. To this end, the online setting that was introduced by \citet{foster2008alpha}, provides a flexible framework where it is assumed that an unknown and arbitrarily large number of hypotheses can be tested in a sequential manner. In order to address the arising multiplicity problems, many different online test procedures have been developed over the years. While most of them aim for control of the false discovery rate \citep{foster2008alpha,aharoni2014generalized,javanmard2018online,ramdas2018saffron}, there are also situations in which the control of the familywise error rate (FWER) may be more appropriate. 

This can be the case when optimizing machine learning algorithms \citep{feng2022sequential} or in a platform trial, where different treatment arms start at different time points and use a shared control group \citep{robertson2023online}. \citet{tian2021online} developed online procedures specifically for familywise error rate control, e.g., the powerful ADDIS-Spending algorithm. Furthermore, \citet{fischer2022online} derived an online version of the traditional closure principle, and in \citet{fischer2023adaptive} the application and interpretation of ADDIS procedures was facilitated using graphical approaches.

Despite the great progress in the online literature over the recent years, there is still a lack of methods that work for dependent test statistics or $p$-values, as it was for example also recently addressed by \citet{robertson2022online}. However, in reality, it is often way more plausible that there are certain underlying dependencies.

As a concrete motivational example, consider the case of a platform trial (see Fig.~\ref{figure_diagram_platform trial}). Here, dependency issues arise due to the shared control data. Online algorithms for error control in platform trials have been considered, e.g., in \citet{robertson2023online} and \citet{zehetmayer2022online}. 

But there are also other reasons why observations corresponding to different hypotheses could be highly dependent. For example, the observations regarding different hypotheses could originate from the same individual, or a hypothesis that was not rejected at first could be tested again using old and new data. Another application that causes dependencies would be public databases, where different independent research groups have access to the same data and perform their own tests and analyses \citep{robertson2022online}.

\begin{figure}

	\centering

	\includegraphics[scale=0.6]{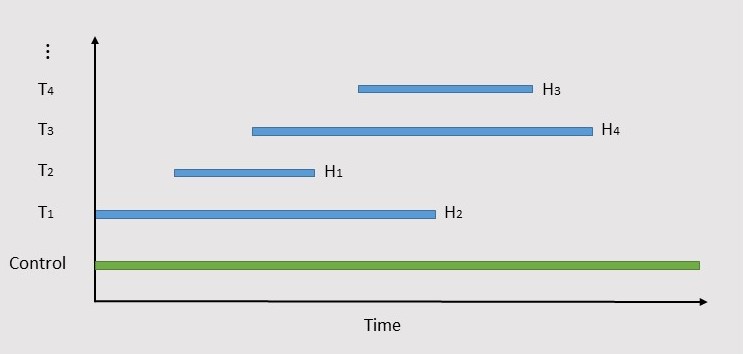}

	\caption{Diagram of a platform trial}\label{figure_diagram_platform trial}

\end{figure}

In the literature so far, the test levels of online procedures depend only on the previous $p$-values or discrete transformations of them  \citep{javanmard2018online,ramdas2018saffron,tian2021online}. When $p$-values are dependent, it is no longer possible to use them directly for specifying test levels. \citet{zrnic2021asynchronous}  circumvented this issue by using only the information of $p$-values that are guaranteed to be independent from the current one, which requires some knowledge of the dependency structure.  In this paper we propose a flexible method that uses the data (instead of just the $p$-value) and can be equally applied, even when there are dependencies in the data and no or few information about the exact dependency structure is available. Thereby we make use of a resampling method that is known as the low intensity  bootstrap (also referred to as $m$ out of $n$ bootstrap) originally proposed in \citet{swanepoel1986note} and thoroughly discussed in \citet{bickel2012resampling}. This is a resampling strategy, where resampling is done with a lower sample size ($m$ instead of $n$) than in the original sample. It can be shown that in some cases this leads to consistency, even when the traditional bootstrap fails \citep{bickel2012resampling}.

The remainder of the paper is structured as follows: In Section\,\ref{section_preliminaries} we introduce the underlying set-up and notation and review some of the already existing methods for online familywise error rate control. The main results are given in Section\,\ref{section_main_results}, where we introduce the notion of consistent weights and, building upon this, derive new online procedures for familywise error rate control under dependency. In Section\,\ref{section_simulations} numerical simulations are performed in two different settings to exemplify the theoretical considerations. The results are again summarized and discussed in Section\,\ref{section_discussion}. The more technical proofs and derivations are given in the \nameref{section_appendix}. 

\section{Preliminaries}\label{section_preliminaries}

\subsection{Notation and model set-up}\label{subsection_notation} Suppose we want to sequentially test null hypotheses $H_1,H_2,H_3,\ldots$ and at each step $i$ we perform the test based on data $\mathcal{X}_i$, generated by the true distribution $\mathbb{P}$. In each case, the corresponding sample size, denoted by $n_i$, is assumed to be data independent.  We make no further assumptions on the data sets, so they possibly could have an arbitrary overlap  for different hypotheses or even coincide. With $n\in\N$ we denote a general parameter of the sample size, i.e., for all $i \in \N$, it holds that $n_i\rightarrow\infty$ as $n\rightarrow\infty$. All asymptotic considerations throughout this paper are performed by letting $n\rightarrow\infty$.  When testing the null hypothesis $H_i$  versus the alternative $H'_i$, the decisions are made in terms of test statistics $T_{i,n}(\mathcal{X}_i)$ or  $p$-values $P_{i,n}(\mathcal{X}_i)$. The latter is more in line with the common assumption in the online multiple testing literature. For ease of notation, we typically just use the index $n$ to denote sample size dependent quantities. For instance, we write $P_{i,n}$ instead of $P_{i,n_i}$.  An \emph{online test procedure} is then a family of test levels $(\alpha_{i,n})_{i\in\N}$ that depend on the previous data, i.e., $\alpha_{i,n}=\alpha_{i,n}(\mathcal{X}_1,\ldots ,\mathcal{X}_{i-1})$. We will not use the more general mathematical definition of an online multiple test suggested in \citet{fischer2022online}.

For an $N\in\N$, let further denote the set of indices corresponding to the true null hypotheses among $H_1,\ldots ,H_N$ by $\mathcal{H}^0=\mathcal{H}^0(N)$ and the number of erroneously rejected hypotheses by $V_n=V_n(N)=\sum_{i\in\mathcal{H}^0}\1\{P_{i,n}\leq\alpha_{i,n}\}$. Then, the familywise error rate is defined as 
\begin{equation*}
	\FWER_n=\FWER_n(N)=\PB(V_n\geq 1).
\end{equation*}
The goal is to derive online test procedures that control the familywise error rate asymptotically, i.e., for any $N\in\N$, it holds that
\begin{equation}\label{def_asymptotic_FWER_control}
	\limsup_{n\rightarrow\infty}\FWER_n(N)\leq\alpha.
\end{equation}
Equation \eqref{def_asymptotic_FWER_control} says that the familywise error rate is controlled asymptotically (with respect to the sample size) for any finite number of hypotheses $N$. In contrast to the case of exact error control, where this would also imply the statement for an infinite number of hypotheses, we can \emph{not} guarantee that 
\[\limsup_{n\rightarrow\infty}\limsup_{N\to\infty}\FWER_n(N)\leq\alpha\]
will hold. For this, further assumptions regarding uniform convergence among all hypotheses might be needed, which are not realistic in this general hypothesis testing set-up. Moreover, \eqref{def_asymptotic_FWER_control} reflects the common situation found in essentially all real-life applications, where the number of hypotheses is a priori unknown but finite.

\subsection{Existing procedures for online familywise error rate control}\label{subsection_existing_methods}

We start reviewing some of the currently existing procedures for online familywise error rate control. The  most straight forward method is the so called \emph{Alpha-Spending} algorithm \citep{foster2008alpha}. The idea is to simply divide the global error budget among all hypotheses, and thus it can be seen as an online version of the weighted Bonferroni correction in traditional multiple testing. This can be implemented by selecting a non-negative sequence $(\alpha_i)_{i\in\N}$ such that 

\[\sum_{i\in\N}\alpha_i=\alpha.\]

It can easily be shown by  the Bonferroni inequality that this procedure controls the familywise error rate regardless of the dependence structure of the $p$-values. Unfortunately, it is generally conservative, so there is an interest in developing new and more powerful online multiple testing procedures.

\citet{tian2021online} derived advanced spending algorithms, first and foremost the powerful \emph{ADDIS-Spending} procedure, which combines the concepts of adaptivity and discarding.  Since we focus in this work on the notion of adaptivity (see also the discussion in Section\,\ref{section_discussion}), i.e. accounting for the proportion of true null hypotheses, we briefly introduce the so called \emph{Adaptive-Spending} \citep{tian2021online}. For this, let $\lambda\in (0,1)$, and we define the so called \emph{candidate indicator} of the $i$-th $p$-value by $C_i=\1\{P_i\leq\lambda\}$. Tian and Ramdas showed that, as long as the $p$-values are independent, it is sufficient for familywise error rate control that the online procedure $(\alpha_i)_{i\in\N}$ satisfies

\begin{equation}\label{eqn_adaptive_spending_general_rule}
	\sum_{i\in\N} \frac{1-C_i}{1-\lambda}\alpha_i\leq\alpha.
\end{equation}

Here, the local test levels $\alpha_i$ are allowed to depend on the indicators $C_1,\dots,C_{i-1}$ but not on $C_i$.

As a concrete example of such a procedure, they proposed to pick a non-negative sequence $(\gamma_i)_{i\in\N}$ such that $\sum_{i\in\N}\gamma_i=1$ and define the test levels as

\begin{equation}\label{def_test_levels_adaptive_spending}
	\alpha_i=\alpha (1-\lambda) \gamma_{t(i)}, 
\end{equation}

where $t(i)=1+\sum_{j<i}(1-C_j)$. 

The idea of \eqref{eqn_adaptive_spending_general_rule} is to decrease the remaining error budget only when a true null hypothesis has been tested. The use of the candidate indicator, as a surrogate for the indicator of a true alternative, relies on the fact that $p$-values corresponding to true alternatives are usually small and therefore  rather unlikely to exceed the threshold $\lambda$. The factor $(1-\lambda)$ accounts for the $p$-values, for which the null hypothesis is true (from now on referred to as \emph{null $p$-values}), that are by chance smaller than $\lambda$. 

Since the test levels in \eqref{eqn_adaptive_spending_general_rule} directly depend on the previous $p$-values, we need the underlying assumption that the null $p$-values are valid conditional to the previous $p$-values, i.e., for all $i\in\mathcal{H}^0$ and $x\in[0,1]$, it holds that

\begin{equation}\label{assumption_conditional_super_uniform}
	\PB (P_i\leq x\mid P_1,\ldots ,P_{i-1})\leq x.
\end{equation}

By using the same proof as in \citet{tian2021online}, it is clear that familywise error rate control is guaranteed when \eqref{eqn_adaptive_spending_general_rule} and \eqref{assumption_conditional_super_uniform} are fulfilled.

While \eqref{assumption_conditional_super_uniform} is always true when the $p$-values are independent, it is easily violated when dependencies between the $p$-values are present, as demonstrated in the following example. 

\begin{example}
	Let $(Z_1,Z_2)$ be bivariate normally distributed with means $\E(Z_1)=\E(Z_2)=0$, variances $\mathrm{var}(Z_1)=\mathrm{var}(Z_2)=1$ and covariance $\mathrm{cov}(Z_1,Z_2)=\rho>0$. Consider the one-sided $p$-values $P_i=1-\Phi(Z_i)$ for $i=1,2$. With $c_1<1/2<c_2$ and $q_i=\Phi^{-1}(1-c_i),i=1,2$, it holds that
	\begin{align*}
		\PB (P_2\leq c_2\mid P_1=c_1)&=\PB (Z_2\geq q_2\mid Z_1=q_1)\\
		&=\PB \left\{(Z_2-\rho q_1)\slash (1-\rho ^2)^{1/2}\geq (q_2-\rho q_1)\slash (1-\rho ^2)^{1/2}\mid Z_1=q_1\right\}\\
		&= 1-\Phi\left\{(q_2-\rho q_1)\slash (1-\rho^2)^{1/2}\right\}\\
		&>1-\Phi(q_2)=c_2.
	\end{align*}
	In the penultimate line it is used that the conditional distribution of $Z_2$ given $Z_1=q_1$ is normal with mean $\rho q_1$ and variance $1-\rho ^2$. Moreover, the final inequality holds, since $q_2<0$, $\rho q_1>0$, and $(1-\rho^2)^{1/2}< 1$ according to the assumptions.

\end{example}

\section{Multiple testing with consistent weights}\label{section_main_results}
\subsection{Consistent weights}
As discussed in Section\,\ref{subsection_existing_methods}, the main idea of the Adaptive-Spending algorithm  is to divide the $p$-values depending on whether they are believed to have arisen under the null hypothesis or the alternative.  This is done by using the information of the indicators $C_i$, which are discrete transformations of the $p$-values, when setting the test levels. Because of that, the $p$-value $P_i$ and the corresponding test level $\alpha_i$ are dependent, unless the $p$-values themselves are independent. In general, this dependency between the $p$-values and the corresponding test level causes the Adaptive-Spending procedures to fail, i.e., error control can no longer be guaranteed. 
To circumvent this issue, we want to generate test levels that use information from the data but become deterministic as the sample size tends to infinity. By this, every $p$-value and their corresponding test level are asymptotically independent. Therefore, we introduce the notion of a weight $\xi_{i,n}=\xi_{i,n}(\mathcal{X}_i)$ that captures, in a sense, the likelihood  of a $p$-value being a null $p$-value and replaces the term $1-C_i$ in Adaptive-Spending algorithms. We now state the formal general property of these weights.

\begin{definition}[Consistent weights]\label{def_consistent_weights}
	Let $\lambda_i\in (0,1)$. We call an array of random variables $(\xi_{i,n})_{i\in\N,n\in\N}$ with values in $[0,1]$ an array of \emph{consistent weights} if there exist constants $c_i\in [0,1],i\in\N$, with $c_i\geq 1-\lambda_i$ for $i\in\mathcal{H}^0$ such that
	\begin{equation}\label{eqn_assumption_xi_consistent}
		\xi_{i,n}\to c_i,\quad (n\to\infty) 
	\end{equation} 
in probability for every $i\in\N$.
\end{definition}

Although not required for family error rate control, the weights $\xi_{i.n}$ should ideally also be close to 0 when the alternative $H_i'$ is true. For example, defining $\xi_{i,n}=1-\lambda_i$ for all $n\in\N$ would obviously satisfy \eqref{eqn_assumption_xi_consistent}, but it would not provide any help in identifying non-null $p$-values. Note, for example, that neither the $p$-values $P_{i,n}$ nor the expressions $1-C_{i,n}=\1\{P_{i,n}>\lambda_i\}$  satisfy the assumption \eqref{eqn_assumption_xi_consistent}, since under the exact null hypothesis $P_{i,n}$ is uniformly distributed on $[0,1]$ for any $n\in\N$.

Next, we illustrate two methods for obtaining weights that satisfy \eqref{eqn_assumption_xi_consistent} from observed data.

\begin{example}\label{example_threshold}
    Assume that in each step $i$, the observed data $\mathcal{X}_i$ consist of random variables $X_{i,1},\ldots ,X_{i,n_i}\sim\mathcal{N}(\mu_i,1)$, and we want to test the hypothesis $H_i:\mu_i \leq 0$ versus the alternative $H_i':\mu_i>0$. The test statistic and the corresponding one-sided $p$-value are  \[Z_{i,n_i}=n_i^{1/2}\overline{X}_{i,n_i},~~~ P_{i,n_i}=1-\Phi (Z_{i,n_i}),\] 
    where $\overline{X}_{i,n_i}$ is the sample mean and $\Phi$ denotes the cumulative distribution function of the standard normal distribution. Furthermore, let $(a_n)_{n\in\mathbb{N}}$ be a  non-decreasing sequence of positive real numbers, so that $a_n\to\infty$ and $n^{-1\slash 2}a_n\to 0$ as $n\to\infty$. For any $\lambda_i\in (0,1)$ we define consistent weights as
    \begin{equation*}
        \xi_{i,n_i}=\begin{cases}
			0&\text{ if }Z_{i,n_i}> a_{n_i}\\
			1-\lambda_i&\text{ if }Z_{i,n_i}\leq a_{n_i}
		\end{cases}.
    \end{equation*}
    It is straightforward to verify that these weights satisfy the conditions in Definition \ref{def_consistent_weights}. Although this method is simple to implement, one drawback is that the sequence $(a_n)_{n\in\N}$ lacks a direct interpretation, making it difficult to choose appropriately. Additionally, it is important to note that these weights are systematically biased in the mean towards 0 under the null hypothesis, which can result in rather anti-conservative procedures for finite sample sizes.
\end{example}

For a more advanced example, we use a resampling technique that is referred to as the low intensity bootstrap. The idea is to resample fewer observations than in the original sample to remedy consistency issues. A commonly given heuristic motivation is that resampling with fewer observations better mimics the relationship of the originally taken sample to the whole ground population \citep{sakov1998using}. 
 The specific resampling scheme will depend on the form of the data $\mathcal{X}_i$ and the test to be conducted. In the following example, which was given in a similar form in \citet{neumann2021estimating}, we illustrate the procedure  for the parametric one-sample test situation as in Example \ref{example_threshold}. The method can be adapted to, e.g., a non-parametric setting \citep{sakov1998using} or 2-sample tests (see \nameref{section_appendix}).

\begin{example}[Low intensity bootstrap example]\label{example_low_intensity}
   We are assuming the same set-up and notation as in Example \ref{example_threshold}. 
Additionally, consider the bootstrap variables $X_{i,1}^*,\ldots ,X_{i,m(n_i)}^*$ resulting from a parametric bootstrap of size $m(n_i)$, i.e., they are independently drawn from the distribution $\mathcal{N}(\overline{X}_{i,n_i},1)$.
	Here, $m:\N\to\N$ is a non-decreasing map such that $m(n)\to\infty$ and $m(n)/n\to 0$ as $n\to\infty$. 
	Additionally, we construct the (low intensity) bootstrap test statistic  $Z_{i,m(n_i)}^*=m(n_i)^{-1/2}\sum_{j=1}^{m(n_i)}X_{i,j}^*$, whose conditional distribution is a normal distribution with mean $\{m(n_i)/n_i\}^{1/2}Z_{i,n_i}$ and variance 1. Moreover, we define the low intensity bootstrap version of the $p$-value $P_{i,n_i}$ by $P_{i,m(n_i)}^*=1-\Phi \{Z^*_{i,m(n_i)}\}$.
	For any $\lambda_i\in (0,1)$, it holds that
	\begin{align*}
		\PB\{P_{i,m(n_i)}^*>\lambda_i\mid\mathcal{X}_i\}&=\PB\{Z_{i,m(n_i)}^*<\Phi^{-1}(1-\lambda_i)\mid\mathcal{X}_i\}\\
		&=\Phi\left[\Phi^{-1}(1-\lambda_i)-\{m(n_i)/n_i\}^{1/2}Z_{i,n_i}\right]\\
		&\to\begin{cases}
			0&\text{ if }\mu_i>0\\
			1&\text{ if }\mu_i<0\\
			1-\lambda_i&\text{ if }\mu_i=0
		\end{cases}.
	\end{align*}
	Thus, by selecting $\xi_{i,n_i}= \PB\{P_{i,m(n_i)}^*>\lambda_i\mid\mathcal{X}_i\}$, property \eqref{eqn_assumption_xi_consistent} is fulfilled.  The low intensity bootstrap, i.e., a bootstrap sample of size $m(n)<n$, is necessary because, when the exact null hypothesis $\mu_i=0$  is true, $Z_{i,n_i}$ is a standard normal random variable regardless of the sample size, and thus, the asymptotic consistency would not be met if $m(n_i)=n_i$.  This issue arises because the bootstrap test statistics are not centralized, as we are not interested in the distribution under the null hypothesis but the true underlying distribution.
\end{example}

As just demonstrated, it is possible to obtain weights $\xi_{i,n}$ from the data by using a low intensity resampling strategy. A common default choice for the size of the resamples is $m(n)=\lfloor n^{1/2}\rfloor$ \citep{neumann2021estimating}. More sophisticated rules regarding the choice of $m$ for the low intensity bootstrap are discussed in \citet{bickel2008choice}.

As mentioned before, besides fulfilling the assumption \eqref{eqn_assumption_xi_consistent} it is also desirable that  $\xi_{i,n}$ tends to 0 under the alternative, which is the case in Example \ref{example_threshold} and Example \ref{example_low_intensity}.

\begin{remark}\label{remark_weights_examples}
    Although both of the mentioned methods lead to consistent weights and therefore are valid choices for the upcoming procedures, we propose to use the bootstrap weights as in Example \ref{example_low_intensity}, because they rely less strictly on pure asymptotics compared to the thresholding weights in Example \ref{example_threshold} and thus seem to perform better in practice. In support of this, it can be shown that for certain situations, procedures using bootstrap weights are guaranteed to have error control not just asymptotically, but also for finite sample sizes (see Theorem~\ref{theorem_finite_sample_size_control}).
\end{remark}

\subsection{Main result}\label{subsection_main_result}

We now present a general framework for incorporating the notions from the previous section into existing algorithms. Therefore, we assume that random variables $\xi_{i,n}$ with  property $\eqref{eqn_assumption_xi_consistent}$ are at hand; for example, they could have been generated according to the low intensity resampling scheme in Example \ref{example_low_intensity}.  

We can now formulate a general rule, very similar to \eqref{eqn_adaptive_spending_general_rule}, for creating online procedures under dependency: Let $\lambda_i\in (0,1)$ for every $i\in\N$. Then,  test levels $(\alpha_{i,n})_{i\in\N}$ have to be chosen in a way such that for every $n,K\in\N$, it holds that
\begin{equation}\label{def_general_dependent_algorithm}
	\sum_{i=1}^K\frac{\alpha_{i,n}}{1-\lambda_i}\xi_{i,n}\leq\alpha.
\end{equation}
Note that, unlike the discrete indicators in \eqref{eqn_adaptive_spending_general_rule}, the weights $\xi_{i,n}$ can take on any value in the range from 0 to 1. Moreover, since we are aiming for asymptotically valid procedures, condition \eqref{def_general_dependent_algorithm} could be relaxed to the effect that it only needs to hold for all but finitely many $n\in\N$, but this has no real practical relevance. In the next two lemmas, we will show that under mild additional assumptions (see Assumptions \ref{ass_consistency_test_levels} and \ref{ass_pvalues_asymptotic_valid}), it is sufficient for asymptotical familywise error rate control to select an online procedure that satisfies \eqref{def_general_dependent_algorithm}.  
\begin{assumption}\label{ass_consistency_test_levels}
The test levels $\alpha_{i,n}=f_i(\xi_{1,n},\ldots ,\xi_{i-1,n})$ are deterministic, continuous functions of the consistent weights. This implies that, due to the definition of consistent weights \eqref{eqn_assumption_xi_consistent} and the continuous mapping theorem, for every $i\geq 1$ it holds that $\alpha_{i,n}\to \alpha_{i}:=f_i(c_1,\dots ,c_{i-1})$ in probability as $n\rightarrow\infty$, where $c_1,\dots ,c_{i-1}$ are the corresponding weight limits.
\end{assumption}

\begin{assumption}\label{ass_pvalues_asymptotic_valid}
 The null $p$-values are asymptotically valid, i.e., for all $i\in\mathcal{H}^0$, there exists a (super-)uniformly distributed random variable $P_i$ such that $P_{i,n}\to P_i$ in distribution as $n\rightarrow \infty$.
\end{assumption}

\begin{remark}\label{remark_consistency_test_levels}
      Assumption \ref{ass_pvalues_asymptotic_valid} is, for example, fulfilled in the test situation described in Example \ref{example_low_intensity}. There, $P_i$ follows a uniform distribution on $[0,1]$ if $\mu_i=0$ and $P_i=1$ almost surely if $\mu_i<0$.
\end{remark}

\begin{lemma}\label{lemma_asymptotical_super_uniformity}
	Let $(\alpha_{i,n})_{i\in\N}$ be an online procedure with asymptotically deterministic test levels, i.e., for every $i\geq 1$, there exists a constant $\alpha_i\in [0,1]$ so that $\alpha_{i,n}\to \alpha_{i}$ as $n\rightarrow\infty$. Under Assumption \ref{ass_pvalues_asymptotic_valid}, it holds that
	\begin{equation*}
		\limsup_{n\rightarrow\infty}\E \left(\sum_{i\in\mathcal{H}^0}\1\{P_{i,n}\leq\alpha_{i,n}\}\right)\leq \sum_{i\in\mathcal{H}^0}\alpha_{i}.
	\end{equation*}
\end{lemma}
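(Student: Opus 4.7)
The plan is to reduce the statement to a pointwise claim for each individual $i \in \mathcal{H}^0$. Since $\mathcal{H}^0 \subseteq \{1, \ldots, N\}$ is finite, linearity of expectation turns the expected count into a finite sum of probabilities, and subadditivity of $\limsup$ (for finitely many terms) lets me push the $\limsup$ inside the sum. It therefore suffices to show, for each $i \in \mathcal{H}^0$, that
\begin{equation*}
\limsup_{n \to \infty} \PB(P_{i,n} \leq \alpha_{i,n}) \leq \alpha_i.
\end{equation*}

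For the pointwise bound I would use an $\varepsilon$-slackening to decouple the random threshold from the $p$-value. For any $\varepsilon > 0$,
\begin{equation*}
\{P_{i,n} \leq \alpha_{i,n}\} \subseteq \{P_{i,n} \leq \alpha_i + \varepsilon\} \cup \{|\alpha_{i,n} - \alpha_i| > \varepsilon\},
\end{equation*}
so a union bound gives $\PB(P_{i,n} \leq \alpha_{i,n}) \leq \PB(P_{i,n} \leq \alpha_i + \varepsilon) + \PB(|\alpha_{i,n} - \alpha_i| > \varepsilon)$. The second term vanishes in $n$ by the hypothesis $\alpha_{i,n} \to \alpha_i$ in probability. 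For the first term, I would invoke the Portmanteau theorem: since $P_{i,n} \to P_i$ in distribution (Assumption \ref{ass_pvalues_asymptotic_valid}) and $(-\infty, \alpha_i + \varepsilon]$ is closed,
\begin{equation*}
\limsup_{n \to \infty} \PB(P_{i,n} \leq \alpha_i + \varepsilon) \leq \PB(P_i \leq \alpha_i + \varepsilon) \leq \alpha_i + \varepsilon,
\end{equation*}
where the final inequality uses super-uniformity of $P_i$. Letting $\varepsilon \downarrow 0$ finishes the pointwise bound.

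The delicate point, and the reason for introducing the auxiliary $\varepsilon$, is that the limiting super-uniform law of $P_i$ may have an atom at $\alpha_i$; a super-uniform $P_i$ is not required to be continuous, so one cannot simply evaluate the distributional limit at the point $\alpha_i$, nor run a bare Slutsky argument on $P_{i,n} - \alpha_{i,n}$ and read off $\PB(P_i \leq \alpha_i)$. Approaching $\alpha_i$ strictly from above and using the closed-set form of Portmanteau avoids this discontinuity issue cleanly, which is the only real subtlety in the argument.
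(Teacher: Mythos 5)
Your proof is correct, and its skeleton — linearity of expectation plus finite subadditivity of the $\limsup$ to reduce to the pointwise bound $\limsup_n\PB(P_{i,n}\leq\alpha_{i,n})\leq\alpha_i$ — is exactly the paper's. Where you differ is in how the random threshold is decoupled from the $p$-value: your $\varepsilon$-slackening with a union bound, the vanishing of $\PB(|\alpha_{i,n}-\alpha_i|>\varepsilon)$, the closed-set Portmanteau inequality at level $\alpha_i+\varepsilon$, and the final passage $\varepsilon\downarrow 0$ are all sound. The paper instead applies Slutsky's theorem to get $P_{i,n}-\alpha_{i,n}\to P_i-\alpha_i$ in distribution and then uses the Portmanteau inequality for the closed set $(-\infty,0]$, obtaining $\limsup_n\PB(P_{i,n}\leq\alpha_{i,n})\leq\PB(P_i\leq\alpha_i)\leq\alpha_i$ in one stroke. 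One correction to your closing remark: you claim one cannot ``run a bare Slutsky argument on $P_{i,n}-\alpha_{i,n}$ and read off $\PB(P_i\leq\alpha_i)$'' because the limit law may have an atom at $\alpha_i$. That is precisely what the paper does, and it is valid: the closed-set form of Portmanteau gives the one-sided bound $\limsup_n\PB(X_n\in F)\leq\PB(X\in F)$ for \emph{every} closed $F$, with no continuity-set requirement. An atom at $\alpha_i$ only obstructs two-sided convergence of the probabilities, which the lemma never needs since it is a $\limsup$ statement. Your $\varepsilon$-detour is therefore a correct but unnecessary precaution; what it buys is a slightly more elementary argument that avoids invoking Slutsky, at the cost of an extra limiting step.
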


\begin{proof}
	Let $i\in\mathcal{H}^0$. By Assumption \ref{ass_pvalues_asymptotic_valid} we have that $P_{i,n}\to P_i$ in distribution as $n\to \infty$, where $P_i$ is a super-uniformly distributed random variable, i.e., $\PB(P_i\leq x)\leq x$ for all $x\in [0,1]$. Due to Slutsky's theorem and the assumption of asymptotically deterministic test levels, we have that $P_{i,n}-\alpha_{i,n}\to P_i-\alpha_i$ in distribution as $n\rightarrow\infty$.
	It follows that
	\begin{align*}
\limsup_{n\rightarrow\infty}\PB(P_{i,n}\leq\alpha_{i,n})&=\limsup_{n\rightarrow\infty}\PB(P_{i,n}-\alpha_{i,n}\leq 0)\\
		&\leq \PB(P_i-\alpha_i\leq 0)\\
		&= \PB(P_i\leq\alpha_i)\\
		&\leq \alpha_i,
	\end{align*}
	where the first inequality holds due to the Portmanteau theorem. The second inequality results from the fact that $P_i$ is (super-)uniformly distributed and $\alpha_i$ is a constant. Due to the linearity of the expectation and the subadditivity of the $\limsup$, the entire inequality also applies to the sum.
\end{proof}

\begin{lemma}\label{lemma_asymptotical_test_levels_conservative}  Let $(\alpha_{i,n})_{i\in\N}$ be an online procedure satisfying \eqref{def_general_dependent_algorithm}. Under Assumption \ref{ass_consistency_test_levels} it holds that 
	\begin{equation*}    
		\sum_{i\in\mathcal{H}^0}\alpha_i\leq\alpha.
	\end{equation*}
\end{lemma}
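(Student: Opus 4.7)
The plan is to start from the finite-$K$ form of the budget constraint \eqref{def_general_dependent_algorithm}, pass to the limit in $n$ while $K$ is held fixed, restrict the resulting deterministic inequality to indices in $\mathcal{H}^0$, and finally let $K$ grow. Since the quantities appearing on the right-hand limits are constants, the bulk of the work is one careful application of the ``convergence in probability to a constant preserves deterministic bounds'' principle.

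First, I would fix an arbitrary $K\in\N$. For every $n\in\N$, the left-hand side of \eqref{def_general_dependent_algorithm} is a random variable bounded above by $\alpha$. By Assumption \ref{ass_consistency_test_levels}, $\alpha_{i,n}\to\alpha_i$ in probability, and by Definition \ref{def_consistent_weights}, $\xi_{i,n}\to c_i$ in probability, with both $\alpha_i$ and $c_i$ deterministic. Since a finite sum and finite product of sequences converging in probability to constants converges in probability to the sum/product of the limits (continuous mapping theorem), I obtain
\begin{equation*}
    \sum_{i=1}^K \frac{\alpha_{i,n}}{1-\lambda_i}\xi_{i,n} \;\xrightarrow{\;\P\;}\; \sum_{i=1}^K \frac{\alpha_i}{1-\lambda_i}c_i, \qquad (n\to\infty).
\end{equation*}
Because the sequence on the left is bounded above by $\alpha$ for every $n$ and its limit is a constant, the limit must satisfy the same bound; hence $\sum_{i=1}^K \frac{\alpha_i}{1-\lambda_i}c_i \leq \alpha$.

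Next, I would use the defining property of consistent weights that $c_i\geq 1-\lambda_i$ whenever $i\in\mathcal{H}^0$, so that $c_i/(1-\lambda_i)\geq 1$ for each such $i$. All summands being non-negative, I can drop the non-null indices and replace the factor $c_i/(1-\lambda_i)$ by $1$ on the remaining terms, giving
\begin{equation*}
    \sum_{i\in \mathcal{H}^0,\, i\leq K} \alpha_i \;\leq\; \sum_{i=1}^K \frac{\alpha_i}{1-\lambda_i}c_i \;\leq\; \alpha.
\end{equation*}
Letting $K\to\infty$, the left-hand side is monotone non-decreasing in $K$ (non-negative summands), so monotone convergence yields $\sum_{i\in\mathcal{H}^0}\alpha_i \leq \alpha$, as claimed.

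The only slightly delicate point is the first step, namely taking the limit of the deterministic inequality through convergence in probability. This is unproblematic once one observes that the limit is a constant: if $X_n\leq \alpha$ almost surely and $X_n$ converges in probability to a constant $x$, then $x\leq\alpha$, since otherwise $\varepsilon:=(x-\alpha)/2>0$ would give $\P(|X_n-x|>\varepsilon)=1$ for all $n$. Beyond this, the argument is essentially bookkeeping, and no further assumption on the dependence between the weights $\xi_{i,n}$ or on the behaviour of $\alpha_{i,n}$ for non-null indices is needed.
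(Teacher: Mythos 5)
Your proposal is correct and follows essentially the same route as the paper's proof: convergence in probability of the weights and test levels to constants, preservation of the deterministic bound $\leq\alpha$ in the limit, and the inequality $c_i/(1-\lambda_i)\geq 1$ for $i\in\mathcal{H}^0$. The only cosmetic difference is that you pass to the limit over the full index set $\{1,\ldots,K\}$ before restricting to $\mathcal{H}^0$, whereas the paper restricts to the (finite) set $\mathcal{H}^0(N)$ first; your explicit justification of why an almost-sure bound survives convergence in probability to a constant is a welcome detail the paper leaves implicit.
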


\begin{proof}
	For all $n\in\N$, we have $ \sum_{i\in \mathcal{H}^0}\frac{\alpha_{i,n}}{1-\lambda_i}\xi_{i,n}\leq \alpha$ almost surely. Due to the definition of  consistent weights \eqref{eqn_assumption_xi_consistent} and Assumption \ref{ass_consistency_test_levels} for every $1\leq i\leq N$, it holds that $\alpha_{i,n}\to\alpha_i$ and $\xi_{i,n}\to c_i$ in probability as $n\to\infty$. Since $\mathcal{H}^0=\mathcal{H}^0(N)$ is finite, it follows that
	\[\sum_{i\in \mathcal{H}^0}\frac{\alpha_{i,n}}{1-\lambda_i}\xi_{i,n}\to\sum_{i\in \mathcal{H}^0}\frac{\alpha_i}{1-\lambda_i}c_i\geq\sum_{i\in\mathcal{H}^0}\alpha_i.\]
	Hence, the lemma.

\end{proof}

Now, we can combine the previous lemmas to obtain the main result.

\begin{theorem}\label{thm_main_error_control}
	Let $(\alpha_{i,n})_{i\in\N}$ be an online procedure  satisfying \eqref{def_general_dependent_algorithm}. Under Assumptions \ref{ass_consistency_test_levels} and \ref{ass_pvalues_asymptotic_valid} it ensures asymptotical control of the familywise error rate, i.e.,
	\[\limsup_{n\rightarrow\infty}  \emph{\FWER}_n\leq\alpha.\]
\end{theorem}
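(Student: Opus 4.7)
The proof is essentially an assembly of the two preceding lemmas together with Markov's inequality, so there is no real difficulty beyond checking that the hypotheses of each lemma are in place.

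First I would bound the familywise error rate by the expected number of false rejections. Since $V_n=\sum_{i\in\mathcal{H}^0}\1\{P_{i,n}\leq\alpha_{i,n}\}$ is a non-negative integer-valued random variable, Markov's inequality gives
\begin{equation*}
\FWER_n=\PB(V_n\geq 1)\leq \E(V_n)=\E\left(\sum_{i\in\mathcal{H}^0}\1\{P_{i,n}\leq\alpha_{i,n}\}\right).
\end{equation*}
Taking $\limsup$ on both sides reduces the claim to controlling the expected number of false rejections asymptotically.

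Next I would invoke Lemma~\ref{lemma_asymptotical_super_uniformity} to handle the right-hand side. Its hypothesis is that the test levels are asymptotically deterministic, which is guaranteed here by Assumption~\ref{ass_consistency_test_levels}: the continuous mapping theorem applied to the consistent weights yields $\alpha_{i,n}=f_i(\xi_{1,n},\dots,\xi_{i-1,n})\to f_i(c_1,\dots,c_{i-1})=:\alpha_i$ in probability for every $i$. Combined with Assumption~\ref{ass_pvalues_asymptotic_valid}, Lemma~\ref{lemma_asymptotical_super_uniformity} therefore delivers
\begin{equation*}
\limsup_{n\to\infty}\E(V_n)\leq\sum_{i\in\mathcal{H}^0}\alpha_i.
\end{equation*}

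Finally I would apply Lemma~\ref{lemma_asymptotical_test_levels_conservative}, whose hypotheses (Assumption~\ref{ass_consistency_test_levels} and the sufficient condition \eqref{def_general_dependent_algorithm}) are precisely those of the theorem, to conclude that $\sum_{i\in\mathcal{H}^0}\alpha_i\leq\alpha$. Chaining the three bounds yields $\limsup_{n\to\infty}\FWER_n\leq\alpha$, completing the argument. The only step that required any thought is the initial use of Markov's inequality to move from the probability $\PB(V_n\geq 1)$ to the expectation $\E(V_n)$; everything after that is a direct citation of the two lemmas.
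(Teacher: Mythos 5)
Your proposal is correct and follows exactly the paper's own argument: bound $\PB(V_n\geq 1)$ by $\E(V_n)$, then chain Lemma~\ref{lemma_asymptotical_super_uniformity} and Lemma~\ref{lemma_asymptotical_test_levels_conservative}. You have merely spelled out the hypothesis-checking that the paper leaves implicit, which is fine.
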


\begin{proof}
	This is a direct conclusion of Lemma \ref{lemma_asymptotical_super_uniformity} and Lemma \ref{lemma_asymptotical_test_levels_conservative} plus the fact that $\PB(V_n\geq 1)\leq\E (V_n)$.	
\end{proof}

We conclude this section with a finite sample size result for the case of independent test statistics. The proof is provided in the \nameref{section_appendix}.

\begin{theorem}\label{theorem_finite_sample_size_control}
    Let $H_1,H_2,H_3,\dots$ be hypotheses concerning the one-sided testing of means and assume that we have (centralized) estimators $\hat{\theta}_{i,n_i}$ for $i\geq 1,$ that are independent and normally distributed with expectation 0 and variance $s_{i,n_i}^2$ under the null hypothesis. The corresponding $p$-values are given by $P_{i,n_i}=1-\Phi (\hat{\theta}_{i,n_i}/s_{i,n_i})$. For each $i\geq 1$, let $\hat{\theta}_{i,m(n_i)}^*$ be bootstrap estimators obtained through a parametric bootstrap procedure using a sample size $m(n_i)$, where $m(n)\to\infty$ and $m(n)\slash n\to 0$ if $n\to\infty$, so that $\hat{\theta}_{i,m(n_i)}^*\mid\mathcal{X}_i~\sim\mathcal{N}(\hat{\theta}_{i,n_i},s_{i,m(n_i)}^2)$. Furthermore, consider the bootstrap $p$-values $P_{i,m(n_i)}^*=1-\Phi \{\hat{\theta}_{i,m(n_i)}^*\slash s_{i,m(n_i)}\}$ and define consistent weights as $\xi_{i,n_i}=\PB \{P_{i,m(n_i)}^*>\lambda_i\mid\mathcal{X}_i\}$, where $\lambda_i\in [0.5,1)$ for all $i\geq 1$. Then, any online procedure that satisfies \eqref{def_general_dependent_algorithm} with respect to the weights $\xi_{i,n_i}$ ensures exact error control, i.e., $\emph{\FWER}_n\leq\alpha$ for every $n\in\mathbb{N}$.
\end{theorem}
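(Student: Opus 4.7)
The plan is to run the same Bonferroni-style reduction used in the proof of Theorem \ref{thm_main_error_control}, but to replace the asymptotic matching of $\alpha_{i,n}$ to its limit by a finite-sample calculation that establishes, for every $i\in\mathcal{H}^0$ and every $a\in[0,1]$, the pointwise inequality
$\PB(P_{i,n_i}\le a)\le \frac{a}{1-\lambda_i}\,\E(\xi_{i,n_i})$.
The hypothesis $\lambda_i\ge 1/2$ will be used at exactly one point, namely to turn the variance inflation caused by $s_{i,m(n_i)}^2>s_{i,n_i}^2$ into a conservative, rather than anti-conservative, estimate.

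The first step is to evaluate $\xi_{i,n_i}$ in closed form. Since, conditional on $\mathcal{X}_i$, the standardised bootstrap estimator $\hat\theta^{*}_{i,m(n_i)}/s_{i,m(n_i)}$ is $\mathcal{N}(\hat\theta_{i,n_i}/s_{i,m(n_i)},1)$, writing $q_i:=\Phi^{-1}(1-\lambda_i)$ gives $\xi_{i,n_i}=\Phi(q_i-\hat\theta_{i,n_i}/s_{i,m(n_i)})$. Under the null $\theta_i\le 0$, the variable $\hat\theta_{i,n_i}/s_{i,n_i}$ is $\mathcal{N}(\mu_i,1)$ with $\mu_i=\theta_i/s_{i,n_i}\le 0$. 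Setting $r_i:=s_{i,n_i}/s_{i,m(n_i)}>0$ and applying the standard identity $\E\,\Phi(a+bZ)=\Phi(a/\sqrt{1+b^2})$ for $Z\sim\mathcal{N}(0,1)$, one obtains $\E(\xi_{i,n_i})=\Phi\bigl((q_i-r_i\mu_i)/\sqrt{1+r_i^{\,2}}\bigr)$. The assumption $\lambda_i\ge 1/2$ forces $q_i\le 0$, so dividing by $\sqrt{1+r_i^{\,2}}>1$ only moves the argument closer to zero, while $\mu_i\le 0$ adds a nonnegative term in the numerator; together these give $(q_i-r_i\mu_i)/\sqrt{1+r_i^{\,2}}\ge q_i$, hence $\E(\xi_{i,n_i})\ge \Phi(q_i)=1-\lambda_i$. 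Combining with the super-uniformity $\PB(P_{i,n_i}\le a)\le a$ of the null $p$-values yields the key inequality displayed above.

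With the key inequality in hand, the proof closes as in Theorem \ref{thm_main_error_control}. Because $\mathcal{X}_1,\mathcal{X}_2,\dots$ are independent, $\alpha_{i,n}$ is a function of $\mathcal{X}_1,\dots,\mathcal{X}_{i-1}$ only and is therefore independent of the pair $(P_{i,n_i},\xi_{i,n_i})$. Conditioning on $\alpha_{i,n}$, applying the pointwise inequality, and then using the independence of $\alpha_{i,n}$ and $\xi_{i,n_i}$ to rejoin the expectation yields $\E[\1\{P_{i,n_i}\le\alpha_{i,n}\}]\le \E[\alpha_{i,n}\xi_{i,n_i}/(1-\lambda_i)]$. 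Summing over $i\in\mathcal{H}^0$, using $\FWER_n\le \E(V_n)$, and inserting the almost-sure budget bound \eqref{def_general_dependent_algorithm}, whose summands are all nonnegative, produces $\FWER_n\le\alpha$ for every $n\in\mathbb{N}$.

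The main obstacle is the closed-form monotonicity step: one must articulate carefully why $\lambda_i\ge 1/2$ is exactly the condition needed to absorb the extra variance $\sqrt{1+r_i^{\,2}}$ introduced by the low intensity bootstrap, so that the lower bound $\E(\xi_{i,n_i})\ge 1-\lambda_i$ holds uniformly in the choice of $m(n_i)<n_i$ and in the null drift $\mu_i\le 0$. Everything else reduces to a routine combination of independence, Markov's inequality, and the budget constraint.
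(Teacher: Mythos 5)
Your proposal is correct and follows essentially the same route as the paper: both rest on showing $\E(\xi_{i,n_i})\ge 1-\lambda_i$ under the null (you via the closed form $\xi_{i,n_i}=\Phi(q_i-\hat\theta_{i,n_i}/s_{i,m(n_i)})$ and the Gaussian smoothing identity, the paper via the unconditional $\mathcal{N}(0,s_{i,n_i}^2+s_{i,m(n_i)}^2)$ law of $\hat\theta^*_{i,m(n_i)}$ --- the same computation), then exploit independence of $\xi_{i,n_i}$ from $\mathcal{F}^{i-1}$ to insert the factor $\xi_{i,n_i}/(1-\lambda_i)$ inside the expectation, sum, and invoke the budget constraint \eqref{def_general_dependent_algorithm} together with $\FWER_n\le\E(V_n)$. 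Your treatment of the composite null drift $\mu_i\le 0$ is a mild (harmless) generalization of the paper's exact-null computation; otherwise the arguments coincide.
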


\subsection{Test procedures based on consistent weights}\label{subsection_procs_consistent_weights}

In the following, we give a few examples of concrete implementations of  online procedures that assure \eqref{def_general_dependent_algorithm} and thus, according to Theorem \ref{thm_main_error_control}, control the familywise error rate.

\begin{example}\label{example_procedure_uniform}
	For an easy implementation of an online procedure that readily satisfies \eqref{def_general_dependent_algorithm}, we  choose the test levels as
	\begin{equation}\label{eqn_test_levels_canonic_proc}
		\alpha_{i,n}=\Pi_i (1-\lambda_i) \left(\alpha - \sum_{j<i}\frac{\alpha_{j,n}}{1-\lambda_j}\xi_{j,n}\right),
	\end{equation}
	with constants $\Pi_i,\lambda_i\in (0,1)$.
	At any step, the procedure allocates a specific fraction of the remaining alpha budget for the current test.
	This idea, with candidate indicators instead of consistent weights, was originally considered for false discovery rate control in \citet{fisher2021saffron}, where the special case of choosing $\Pi_i=\Pi\in (0,1)$ was proposed and the relation to a geometric spending sequence was discussed. With this selection, the same proportion of remaining alpha wealth would be spent at any step.  This consistent behavior seems reasonable when there is no information on the number of hypotheses that are going to be tested eventually. Another advantage is that there exists a  recursion formula only using the information of the previous test. For $i\geq 1$ it holds that 
	\begin{equation}\label{def_testlevel_uniform_recursion}
		\alpha_{i+1,n}=\Pi_{i+1} (1-\lambda_{i+1}) \frac{\alpha_{i,n}(1-\Pi_i\xi_{i,n})}{(1-\lambda_i)\Pi_i}.
	\end{equation}
	When using a constant $\Pi\in (0,1)$ as in \citet{fisher2021saffron} and with the common default choice $\lambda_i\equiv\lambda\in (0,1)$, the update rule further simplifies to 
	\[\alpha_{i+1,n}=\alpha_{i,n}(1-\Pi\xi_{i,n}).\]
\end{example}

\begin{example}[Continuous Adaptive-Graph]\label{example_procedure_graph}
	For a more advanced example, we first briefly review the ADDIS-Graph that was proposed  in \citet{fischer2023adaptive} for independent or locally dependent $p$-values. This is an online procedure that combines the ideas of adaptivity and discarding \citep{tian2021online} and the Online-Graph \citep{tian2021online,fischer2023adaptive}, where sometimes test levels can be redistributed to future hypotheses.
	To this end, let $(\gamma_i)_{i\in\N}$ and $(g_{j,i})_{i=j+1}^\infty,j\in\N,$ be non-negative sequences that sum to at most 1 and $\tau_i\in(0,1],\lambda_i\in[0,\tau_i)$ for all $i\in\N$.
	The test levels are then defined as
	\begin{equation*}
		\alpha_{i}=(\tau_i - \lambda_i)  \left\{\alpha \gamma_i + \sum_{j=1}^{i-1} g_{j,i}(C_j-S_j+1)\frac{\alpha_{j}}{\tau_j-\lambda_j}\right\}, 
	\end{equation*}
	where $C_j=\1\{P_j\leq\lambda_j\}$ and $S_j=\1\{P_j\leq\tau_j\}$ denote the indicators for a candidate or non-discarding, respectively. The idea is that in certain cases, namely when a $p$-value is either classified as a candidate ($C_j=1$) or discarded ($S_j=0$), the test level can be proportionally distributed  to the future hypotheses according to the weights $g_{j,i}$.
	To adapt this procedure to the new framework, the candidate indicators are substituted by the consistent weights $\xi_{j,n}$. Since in this paper we are solely dealing with the concept of adaptivity, we set $\tau_j=1$ and thus $S_j=1$ for all $j\in\N$, meaning that no $p$-value is discarded. 
	This results in the so called \emph{continuous Adaptive-Graph}, whose test levels are then defined by
	\begin{equation}\label{def_test_levels_dependent_graphgeneral_case}
		\alpha_{i,n}=(1 - \lambda_i)  \left\{\alpha \gamma_i + \sum_{j=1}^{i-1} g_{j,i}(1-\xi_{j,n})\frac{\alpha_{j,n}}{1-\lambda_j}\right\}. 
	\end{equation}
	The proof that the continuous Adaptive-Graph defined in \eqref{def_test_levels_dependent_graphgeneral_case} indeed fulfills condition \eqref{def_general_dependent_algorithm} is given in the \nameref{section_appendix}. 
	It can be shown that the procedure in Example \ref{example_procedure_uniform} with constant $\Pi_i=\Pi\in (0,1)$ is a special case of \eqref{def_test_levels_dependent_graphgeneral_case} by selecting the sequence $\gamma_i=\Pi (1-\Pi)^{i-1}$. For the detailed derivation, we refer to the \nameref{section_appendix}.
\end{example}

\subsection{Continuous spending procedures}\label{subsection_alternative_procs}
In Section\,\ref{subsection_main_result}, we established the general rule \eqref{def_general_dependent_algorithm} that can be used to produce asymptotically valid online test procedures.  While this condition is sufficient, there are also other ways to design asymptotical online test procedures, which do not necessarily fulfill \eqref{def_general_dependent_algorithm}.
To exemplify this, we now provide an example of a class of online test procedures that can be seen as a natural adaptation of the Adaptive-Spending algorithm \eqref{def_test_levels_adaptive_spending}.

For a given $\lambda\in (0,1)$ and a  non-increasing and continuous function $f:[1,\infty)\to \R_{\geq 0}$ with $\int_1^\infty f(x)dx <\infty$, the test levels will be assigned as follows:
\begin{equation}\label{def_test_level_continuous_spending}
	\alpha_{i,n}=\alpha \frac{(1 - \lambda)}{s} f \left(1 + \sum_{j<i}\xi_{j,n}\right),
\end{equation}
where $s=(1-\lambda)f(1)+ \int_1^\infty f(x)dx$ is a constant scaling factor.

This allocation rule states that in each step we lose a bit of wealth determined by the fractions $\xi_{i,n}$. This is similar to \eqref{def_test_levels_adaptive_spending}, but the discrete sequence $(\gamma_i)_{i\in\N}$ and indicator functions $\1\{P_i\leq\lambda_i\}$ are now replaced by a continuous function and the consistent weights, respectively. In this sense, test procedures according to \eqref{def_test_level_continuous_spending} could be referred to as \emph{continuous adaptive spending procedures}.
Note that $\lambda\in (0,1)$ is now assumed to be the same for all hypotheses. It is easy to see that for a continuous adaptive spending procedure, equation \eqref{def_general_dependent_algorithm} does not necessarily hold. Let e.g. $\lambda=1/2$ and $f(x)=2 x^{-4}$. Since $s=(1-\lambda)f(1)+\int_1^\infty f(x)dx=5\slash 3$, the corresponding test levels are  $\alpha_{i,n}=(3\alpha /5) (1 + \sum_{j<i}\xi_{j,n})^{-4}$. It follows that the first summand of the left hand side of \eqref{def_general_dependent_algorithm} is equal to $ (6\alpha/5)\xi_{1,n}$. In general, this expression is not smaller than $\alpha$, as $\xi_{1,n}$ can take on any value between 0 and 1. 

Before we prove error control, we state the following lemma, which takes on a similar role as Lemma \ref{lemma_asymptotical_test_levels_conservative} in Section\,\ref{subsection_main_result}. The proof is given in the \nameref{section_appendix}. 

\begin{lemma}\label{lemma_continuous__spending_test_levels_conservative}
	Let $\lambda\in (0,1)$, $f:[1,\infty)\to \R_{\geq 0}$ be a  non-increasing, continuous, integrable function, and $(\alpha_{i,n})_{i\in\N}$ be an online procedure with test levels according to \eqref{def_test_level_continuous_spending}. Then, under Assumption \ref{ass_consistency_test_levels}, it holds that
	\begin{equation*}    
		\sum_{i\in\mathcal{H}^0}\alpha_i\leq\alpha.
	\end{equation*}
\end{lemma}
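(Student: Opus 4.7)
The plan is to pass from the finite-sample test levels $\alpha_{i,n}$ to their deterministic limits $\alpha_i$ using Assumption \ref{ass_consistency_test_levels} together with continuity of $f$, and then to bound the sum $\sum_{i\in\mathcal{H}^0}\alpha_i$ by comparing a monotone sum of $f$-values with the integral $\int_1^\infty f(x)\,dx$ that appears in $s$.

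First I would note that, by the continuous mapping theorem (since $f$ is continuous), the test levels satisfy
\[
\alpha_{i,n}\to\alpha_i:=\alpha\frac{1-\lambda}{s}\,f\!\left(1+\sum_{j<i}c_j\right)
\]
in probability as $n\to\infty$, where $c_j$ is the probability limit of $\xi_{j,n}$ and, crucially, $c_j\geq 1-\lambda$ for every $j\in\mathcal{H}^0$ by the definition of consistent weights. The question thus reduces to the deterministic inequality
\[
(1-\lambda)\sum_{i\in\mathcal{H}^0} f\!\left(1+\sum_{j<i}c_j\right)\leq s=(1-\lambda)f(1)+\int_1^\infty f(x)\,dx.
\]

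Next I would enumerate $\mathcal{H}^0=\{i_1<i_2<\cdots\}$ and set $x_k=1+\sum_{j<i_k}c_j$. Since all $c_j\geq 0$, the sequence $(x_k)$ is non-decreasing with $x_1\geq 1$. The key observation is that between two consecutive null indices the gap satisfies
\[
x_{k+1}-x_k=\sum_{i_k\leq j<i_{k+1}}c_j\geq c_{i_k}\geq 1-\lambda,
\]
because $i_k\in\mathcal{H}^0$. Using that $f$ is non-increasing, this spacing gives
\[
(1-\lambda)f(x_{k+1})\leq (x_{k+1}-x_k)\,f(x_{k+1})\leq\int_{x_k}^{x_{k+1}} f(x)\,dx.
\]
Summing over $k\geq 1$ (and telescoping, using that the $x_k$ are increasing and bounded above by $\infty$) yields
\[
(1-\lambda)\sum_{k\geq 2} f(x_k)\leq\int_{x_1}^{\infty} f(x)\,dx\leq\int_{1}^{\infty} f(x)\,dx,
\]
while the $k=1$ term is handled separately by $(1-\lambda)f(x_1)\leq (1-\lambda)f(1)$, again using monotonicity of $f$ together with $x_1\geq 1$.

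Adding these two estimates gives precisely the bound $(1-\lambda)\sum_{k\geq 1}f(x_k)\leq s$, and multiplying by $\alpha/s$ produces $\sum_{i\in\mathcal{H}^0}\alpha_i\leq\alpha$. The argument is valid whether $\mathcal{H}^0$ is finite or infinite, since the partial sums are monotone and bounded. I expect the main obstacle to be the bookkeeping around the endpoint term $f(x_1)$ and the telescoping of the integral bounds; the rest of the argument is essentially the standard trick of controlling a decreasing-function sum at spaced points by an integral, with the spacing supplied precisely by the assumption $c_j\geq 1-\lambda$ for null indices.
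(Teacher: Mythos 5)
Your proof is correct, but it takes a genuinely different route from the paper's. You pass to the limit first: Assumption \ref{ass_consistency_test_levels} together with the continuity of $f$ identifies $\alpha_i=\alpha\frac{1-\lambda}{s}f\bigl(1+\sum_{j<i}c_j\bigr)$, and the lemma then reduces to the purely deterministic inequality $(1-\lambda)\sum_{k}f(x_k)\leq s$ for the spaced points $x_k=1+\sum_{j<i_k}c_j$. The spacing bound $x_{k+1}-x_k\geq c_{i_k}\geq 1-\lambda$ (valid precisely because $i_k\in\mathcal{H}^0$), the comparison $(1-\lambda)f(x_{k+1})\leq\int_{x_k}^{x_{k+1}}f(x)\,dx$ for non-increasing $f$, the telescoping, and the separate treatment of the $k=1$ term via $f(x_1)\leq f(1)$ are all sound, and the constant $s=(1-\lambda)f(1)+\int_1^\infty f(x)\,dx$ is exactly what makes the two pieces add up. The paper instead works at finite $n$: it bounds $\E\bigl(\sum_i\alpha_{j_i,n}\bigr)$ by splitting on the events $\{\xi_{j_{i-1},n}>1-\lambda-\varepsilon\}$, performs the same sum-versus-integral comparison with the random increments $\xi_{j_l,n}$, obtains the bound $\alpha\frac{1-\lambda}{1-\lambda-\varepsilon}+\varepsilon$, lets $\varepsilon\to 0$, and finally transfers the conclusion to the limits $\alpha_i$ via dominated convergence for convergence in probability. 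Both arguments rest on the same core observation that consecutive null arguments of $f$ are separated by at least $1-\lambda$; your version is shorter and avoids the $\varepsilon$-bookkeeping and the appeal to dominated convergence, while the paper's version additionally yields an approximate finite-sample bound on the expected sum of test levels, which is not needed for the stated conclusion.
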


Now, we can state the main theorem for formal error control.

\begin{theorem}\label{thm_error_control_continuous_spending}
	Let $(\alpha_{i,n})_{i\in\N}$ be an online procedure according to \eqref{def_test_level_continuous_spending}. Under the same conditions as in Lemma \ref{lemma_continuous__spending_test_levels_conservative} and Assumption \ref{ass_pvalues_asymptotic_valid}, it holds that
	\[\limsup_{n\rightarrow\infty} \emph{\FWER}_n\leq\alpha.\]
\end{theorem}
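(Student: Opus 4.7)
The plan is to mirror the structure of the proof of Theorem \ref{thm_main_error_control}: combine an asymptotic bound on the expected number of false rejections (Lemma \ref{lemma_asymptotical_super_uniformity}) with the conservativity bound on the limit test levels (Lemma \ref{lemma_continuous__spending_test_levels_conservative}), and close out with Markov's inequality. The only nontrivial step is verifying the hypothesis of Lemma \ref{lemma_asymptotical_super_uniformity}, namely that the test levels $\alpha_{i,n}$ are asymptotically deterministic.

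First I would check that the procedure in \eqref{def_test_level_continuous_spending} satisfies Assumption \ref{ass_consistency_test_levels}. For each fixed $i$, the test level $\alpha_{i,n}$ is the composition of the continuous function $f$ with the affine map $(x_1,\dots,x_{i-1})\mapsto 1+\sum_{j<i}x_j$, evaluated at the consistent weights $\xi_{1,n},\dots,\xi_{i-1,n}$, so it is a deterministic continuous function of them. Since $\xi_{j,n}\to c_j$ in probability for each $j$ and $i$ is finite, the continuous mapping theorem yields
\begin{equation*}
\alpha_{i,n}\;\longrightarrow\;\alpha_i\;:=\;\alpha\,\frac{1-\lambda}{s}\,f\!\left(1+\sum_{j<i}c_j\right)
\end{equation*}
in probability as $n\to\infty$. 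This is exactly the asymptotic determinism required by Lemma \ref{lemma_asymptotical_super_uniformity}.

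Next, invoking Lemma \ref{lemma_asymptotical_super_uniformity} under Assumption \ref{ass_pvalues_asymptotic_valid} gives
\begin{equation*}
\limsup_{n\to\infty}\mathbb{E}\!\left(\sum_{i\in\mathcal{H}^0}\mathbf{1}\{P_{i,n}\leq\alpha_{i,n}\}\right)\;\leq\;\sum_{i\in\mathcal{H}^0}\alpha_i,
\end{equation*}
and Lemma \ref{lemma_continuous__spending_test_levels_conservative} bounds the right-hand side by $\alpha$. Since $V_n=\sum_{i\in\mathcal{H}^0}\mathbf{1}\{P_{i,n}\leq\alpha_{i,n}\}$ is nonnegative integer valued, Markov's inequality gives $\mathrm{FWER}_n=\mathbb{P}(V_n\geq 1)\leq\mathbb{E}(V_n)$, and taking $\limsup$ on both sides concludes the proof.

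There is essentially no hard step here: the work of the theorem has been done in Lemmas \ref{lemma_asymptotical_super_uniformity} and \ref{lemma_continuous__spending_test_levels_conservative}. The only subtlety one must not overlook is that Lemma \ref{lemma_asymptotical_super_uniformity} was stated for procedures with asymptotically deterministic test levels, which is not postulated in the statement of Theorem \ref{thm_error_control_continuous_spending} itself; one has to derive it explicitly from the continuity of $f$ via the continuous mapping theorem as sketched above, rather than citing Assumption \ref{ass_consistency_test_levels} directly.
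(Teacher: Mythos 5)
Your proposal is correct and follows essentially the same route as the paper: verify that the test levels in \eqref{def_test_level_continuous_spending} are continuous functions of the consistent weights (hence asymptotically deterministic by the continuous mapping theorem), apply Lemma \ref{lemma_asymptotical_super_uniformity} together with Markov's inequality, and conclude with Lemma \ref{lemma_continuous__spending_test_levels_conservative}. The only minor quibble is your closing remark: since the theorem inherits ``the same conditions as in Lemma \ref{lemma_continuous__spending_test_levels_conservative},'' Assumption \ref{ass_consistency_test_levels} is in fact available, though spelling out the continuous mapping argument as you do is exactly what the paper's one-line justification compresses.
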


\begin{proof}
	Since the test levels are continuous transformations of the consistent weights, we can apply Lemma \ref{lemma_asymptotical_super_uniformity} to obtain
	\[\limsup_{n\rightarrow\infty} \FWER_n\leq\sum_{i\in\mathcal{H}^0}\alpha_{i}.\]
	Now, we can use Lemma \ref{lemma_continuous__spending_test_levels_conservative} to complete the proof.
\end{proof}

To conclude this section, we give a concrete example of an online procedure according to the continuous spending approach \eqref{def_test_level_continuous_spending}. 

\begin{example}[Interpolation]\label{example_procedure_interpolation}
	Let $(\gamma_i)_{i\in\N}$ be a non-negative, non-increasing sequence  with $\sum_{i\in\N}\gamma_i =1$. Now we define 
	\begin{equation}\label{eqn_interpolation_function}
		f_\gamma(x)=\gamma_{\floor{x}}+(x-\floor{x}) (\gamma_{\lceil x\rceil} - \gamma_{\floor{x}}),
	\end{equation}
	where $\floor{x}$ and $\lceil x\rceil$ denote the greatest integer smaller\slash least integer greater than or equal to $x$, respectively. 
	The function $f_\gamma$ is the linear interpolation of the sequence $(\gamma_i)_{i\in\N}$, i.e., in particular, it is non-increasing, continuous, and $f_\gamma(x)=\gamma_x$ for all $x\in\N$.
	The test levels according to \eqref{def_test_level_continuous_spending} are defined as
	\begin{equation}\label{def_test_level_interpolation}
		\alpha_{i,n}=\alpha \frac{(1 - \lambda)}{s} f_\gamma \left(1 + \sum_{j<i}\xi_{j,n}\right),
	\end{equation}
	where the scaling factor simplifies to $s=1+\gamma_1 (1/2-\lambda)$ (see \nameref{section_appendix}).
\end{example}

\subsection{Closed testing procedures}\label{subsection_closed_testing}

One of the most important concepts for control of the familywise error rate in the classical offline framework is the \emph{closure principle} by \citet{marcus1976closed}. Here, every intersection hypothesis is tested with a local test, and the resulting multiple test rejects an elementary hypothesis $H_i$ if and only if every intersection hypothesis that is contained in $H_i$ is rejected by the corresponding local test, granting strong control of the familywise error rate. At first glance, this approach does not seem to be suitable for the online setting, since it requires the prior knowledge of all hypotheses at once. However,  \citet{fischer2022online}  developed an online version of the closure principle including a predictability condition, which ensures that in each step only the intersection of hypotheses up to this point needs to be considered. Moreover, they demonstrated a way of converting these online closed procedures into corresponding short-cut procedures.
In this section, we apply  the online closure principle derived in \citet{fischer2022online} to obtain uniform improvements of the online procedures from the previous sections. The central idea is that if any  hypothesis is rejected, the corresponding test level could be reused for future hypotheses. Depending on the procedure, this can, for instance, be implemented by shifting or distributing the regained test level. The Online-Fallback procedure in \citet{tian2021online} can, for example, be seen as a closure of the Alpha-Spending algorithm.  

Let $R_{i,n}=\1\{P_{i,n}\leq\alpha_{i,n}\}$ be the rejection indicator of the $i$-th hypothesis. Then, a closed version of the continuous Adaptive-Graph \eqref{def_test_levels_dependent_graphgeneral_case} is given by
\begin{equation}\label{def_test_levels_dependent_graph_closed}
	\alpha_{i,n}=(1 - \lambda_i)  \left(\alpha \gamma_i + \sum_{j=1}^{i-1} g_{j,i}\max\{1-\xi_{j,n},R_{j,n}\}\frac{\alpha_{j,n}}{1-\lambda_j}\right).
\end{equation}
Similarly, we obtain a closed procedure for the continuous spending approach \eqref{def_test_level_continuous_spending}:
\begin{equation}\label{def_test_level_continuous_spending_closed}
	\alpha_{i,n}=\alpha \frac{(1 - \lambda)}{s} f \left\{1 + \sum_{j<i}(1-R_{j,n})\xi_{j,n}\right\},
\end{equation}
where $s=(1-\lambda)f(1)+ \int_1^\infty f(x)dx$.

It is apparent that both of these closed procedures are uniform improvements of the original ones, respectively.
For the derivation of the formulas, we refer to the \nameref{section_appendix}.

\section{Simulations}\label{section_simulations}
\subsection{General set-up}
To investigate the theoretical results for a finite sample size, we perform a simulation study. Here, we consider two different settings. The first one is a general scenario, where test statistics that are close in time are strongly correlated (e.g., due to data reuse), and those which are further apart from each other have a low or close to zero correlation. The second set-up is more specific and mimics the scenario of a platform trial, where different treatment arms enter the trial at different time points and are tested against a shared control group.

In both settings we explore the closed versions of the proposed continuous Adaptive-Graph \eqref{def_test_levels_dependent_graph_closed} and the continuous adaptive  spending approach \eqref{def_test_level_continuous_spending_closed} (in this section referred to as \emph{Continuous-Graph} and \emph{Continuous-Spending}, respectively), as well as the Online-Fallback procedure \cite{tian2021online}, that grants familywise error rate control regardless of the dependency structure, for comparison. Furthermore, we include the original Adaptive-Spending  \eqref{def_test_levels_adaptive_spending}, for which no online familywise error rate control is guaranteed in these scenarios, to evaluate the impact on the power caused by the adjustments for dependency. In all of the following simulations we choose $g_{j,i}=\gamma_{i-j}$, $f=f_\gamma$ as defined in \eqref{eqn_interpolation_function} and set the remaining hyperparameters as $\lambda=1/2$ and $\gamma_j=6(\pi j)^{-2}$. The consistent weights are defined based on the approach in Example \ref{example_low_intensity}, as they demonstrated better performance in error control compared to the weights in Example \ref{example_threshold}. The power is defined as the expected proportion of truly rejected null hypotheses out of all false null hypotheses. Throughout, we perform 20000 independent simulation runs to estimate the familywise error rate (FWER) and power.

\subsection{Autocorrelation}\label{subsection_simulations_general}

We test in a Gaussian setting $N$ null hypotheses of  the form $H_i:\mu_i \leq 0$ versus the corresponding alternative  \mbox{$H'_i:\mu_i> 0$}. In every run, the parameters $\mu_i$ are set according to the following:

\[\mu_i= \begin{cases}
	0 &\text{ with probability } 1-\pi_1\\
	5 &\text{ with probability } \pi_1
\end{cases},\]

where $\pi_1\in[0,1]$ is the proportion of false null hypotheses. After setting the parameters, we simulate $Z$-Scores according to a multivariate normal distribution, i.e., $(Z_1,\ldots ,Z_N)\sim \mathcal{N}(\mu,\Sigma)$ where $\mu=(\mu_1,\ldots ,\mu_N)$ is the vector of true effects and $\Sigma=(\Sigma_{ij})_{i,j}\in \R^{N\times N}$ is a covariance matrix. We investigate the plausible case where test statistics that are close in time are strongly correlated and the correlation decays rather quickly when test statistics are further apart. Therefore, we set the covariance matrix as  

\[\Sigma_{ij}=\begin{cases}
	1 &\text{ if } i=j\\
	\rho^{|i-j|} &\text{ if } i\neq j
\end{cases}
\]

with $\rho \in \{0.5,0.8\}$.
The  weights $\xi_{i,n}$ required for the Continuous-Graph and the Continuous-Spending are constructed by a low intensity bootstrap as in Example \ref{example_low_intensity} based on the assumed sample size $n$. Due to the parametric set-up, the exact bootstrap probabilities $\PB\{P_{i,m(n)}^*>\lambda\mid\mathcal{X}_i\}$ can be calculated using only the shrunk version of $Z_i$ and the cumulative distribution function of the standard normal distribution, so that no actual resampling is needed.
For the number of hypotheses, we choose $N=1000$, and for the sample size $n=100$, while varying the proportion of true alternatives $\pi_1$. In an additional simulation, we fix $\pi_1=0.1$ and investigate different combinations for the number of hypotheses and the theoretical sample size. Here, we down-scale the effect size with respect to the sample size so that the power remains comparable.   

\begin{figure}
	\centering
	\includegraphics[scale=0.75]{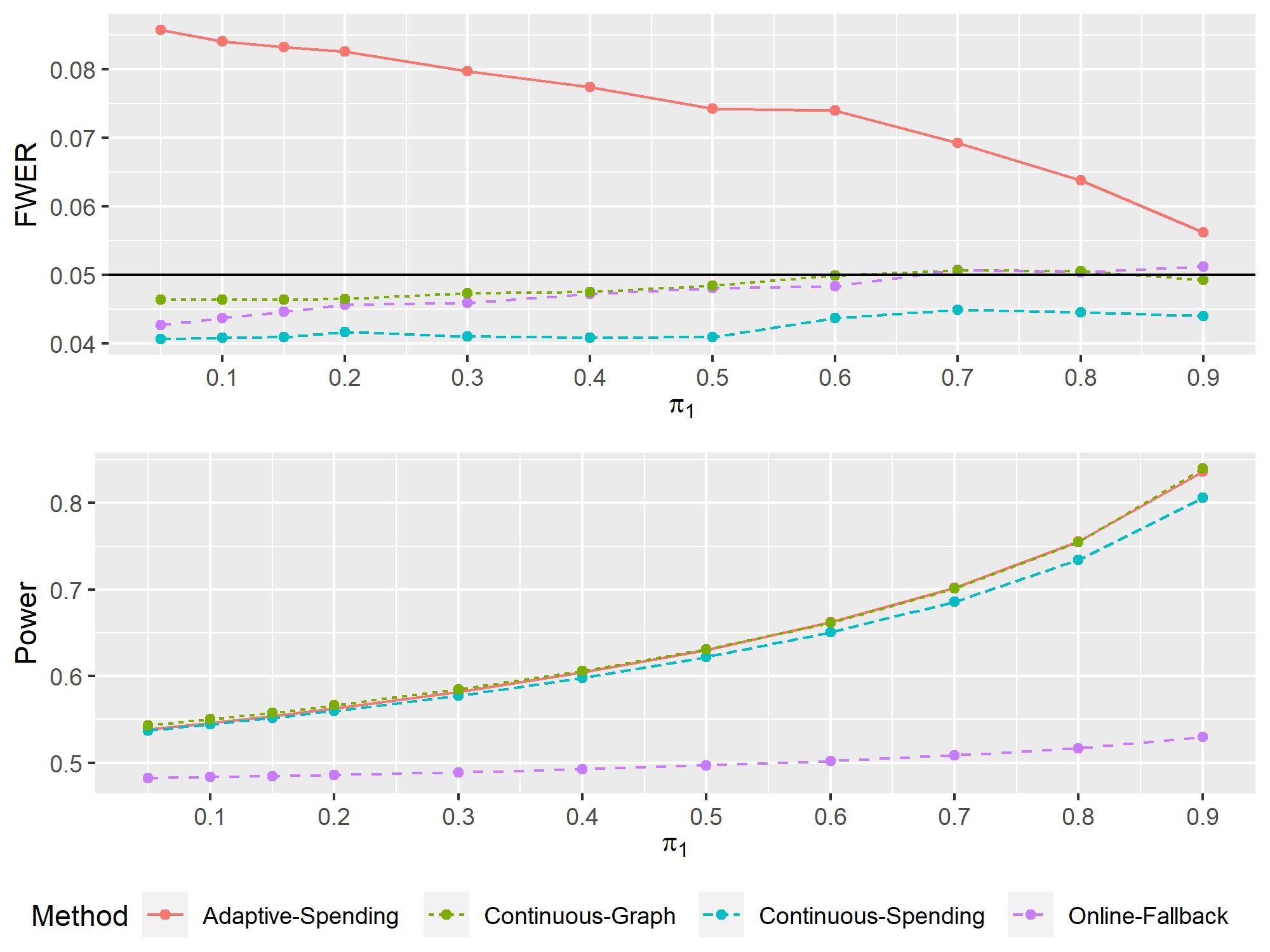}
	\caption{Estimated FWER and power in an autocorrelation set-up with $\rho=0.8$ and target level $\alpha=0.05$ plotted over the proportion of false null hypotheses $\pi_1$.}\label{figure_simulation_autocorrelation_08}
\end{figure}

It can be seen in Fig.~\ref{figure_simulation_autocorrelation_08} that the Continuous-Spending, Continuous-Graph, and Online-Fallback seem to control the familywise error rate at the nominal level $\alpha$. Only the Adaptive-Spending  shows a substantial inflation of 60 \% when the proportion of false null hypotheses is small, which gradually decreases to 20 \% as the proportion of false null hypotheses increases. In terms of power, it is apparent that the Online-Fallback procedure performs considerably worse than the other three procedures, which is no surprise, as it is the only procedure without a proper built-in mechanism for adaption to the number of true null hypotheses. Here, the Continuous-Graph and the Adaptive-Spending are virtually indistinguishable, while both seem to be slightly more powerful than the Continuous-Spending, where the difference becomes clearer for larger values of $\pi_1$. Overall,  the procedures for asymptotic error control seem to work well even with a finite sample size. One illustrative explanation is that the usage of the consistent weights, or more concretely the low intensity bootstrap, in a sense reduces the degree of dependence within the information used to construct the test levels, so that the resulting procedures are comparatively more robust to correlations within the $p$-values.

\subsection{Platform trial}\label{subsection_simulations_platform}

We follow the set-up described in \citet{robertson2023online}. Let $T_1,\ldots ,T_N$ be treatments that enter the platform trial at time points $t_1,\ldots ,t_N$, respectively. Throughout the trial, patients are allocated at a uniform rate to the active treatment arms and the common control arm $C$. 
The responses are given as Gaussian random  variables $X_{ij}\sim\mathcal{N}(\mu_i,\sigma^2)$, $i=0,\ldots ,N$ with known standard deviation $\sigma$. The index 0 corresponds to the control patients. A treatment is generated as effective with probability $\pi_1$. We set $\mu_i=0$ if $T_i$ is ineffective, $\mu_i=0.5$ if it is effective, and $\mu_0=0$. For simplicity, we assume that there will be $n$ patients recruited to each treatment arm $T_i$ when  the according test is eventually performed. Due to the uniform recruitment rate, there will also be $n$ concurrent control patients for each treatment arm. Let $\mathcal{J}_i$ be the index set of the concurrent control observations for the treatment arm $T_i$.
The test statistic for the $i$-th test is of the form 

\[Z_i=\sigma^{-1}(n/2)^{1/2}\left\{\overline{X}_i - \overline{X}^{(i)}_0\right\},\]

where $\overline{X}_i=n^{-1}\sum_{j=1}^nX_{ij}$ and  $\overline{X}^{(i)}_0=n^{-1}\sum_{j\in\mathcal{J}_i}X_{0j}$.
We assume that per unit time there will be 10 patients allocated to all active treatment arms and the control arm. The time point when the treatment $T_i$ enters the trial is defined by $t_i=2(i-1)$. 
Similar to the previous setting, the random variables $\xi_{i,n}$ are calculated via a parametric low intensity bootstrap in the treatment group and the control group, respectively (see \nameref{section_appendix} for the derivation).
We choose $\alpha=0.05$, $\sigma=1$, and $n=100$, i.e., every treatment arm stays in the trial for 10 units of time. For the number of overall treatments, we consider $N\in\{30,50,100\}$.

\begin{figure}
	\centering
	\includegraphics[scale=0.75]{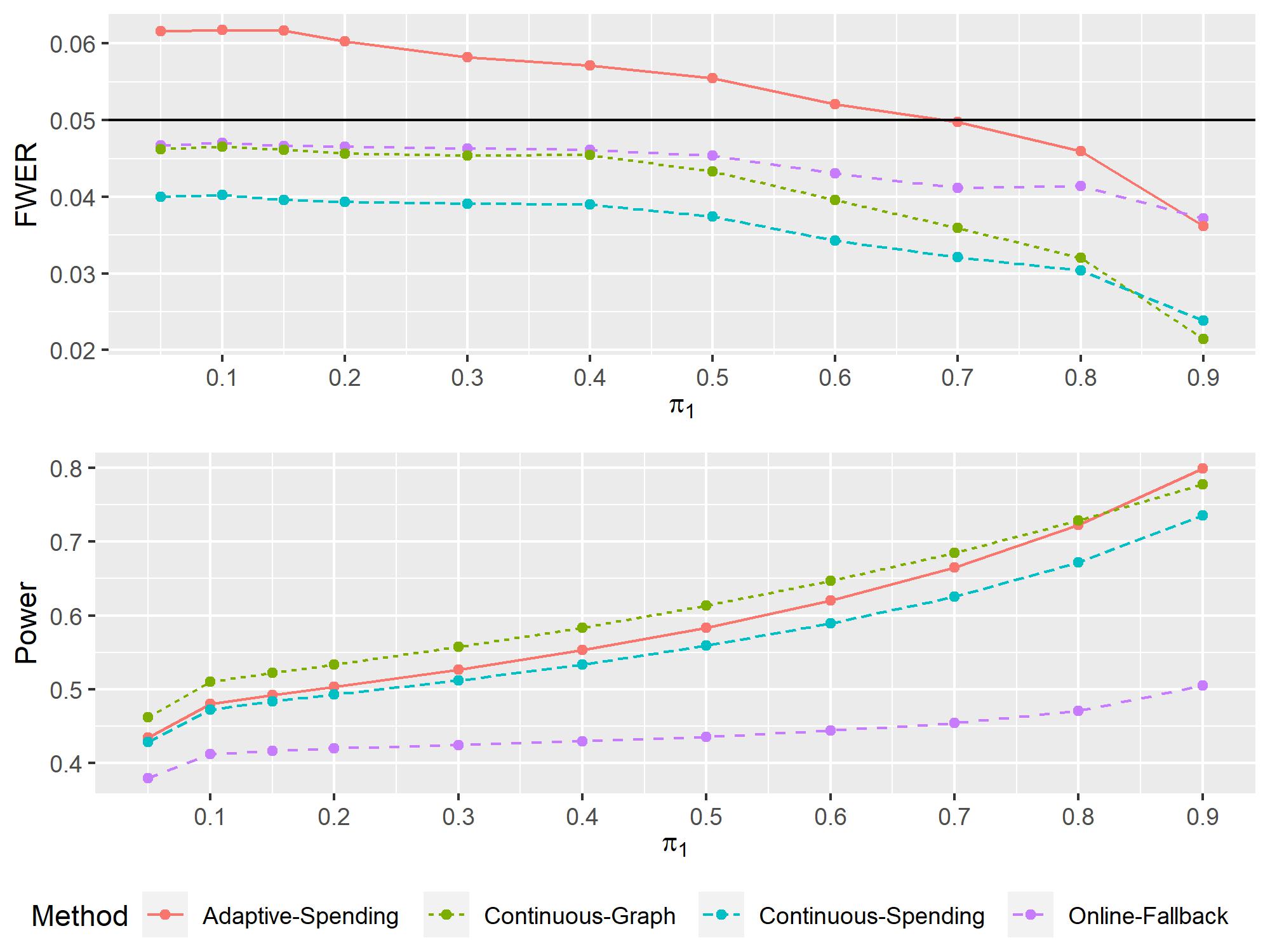}
	\caption{Estimated FWER and power in a platform trial scenario with $N=50$ and target level $\alpha=0.05$ plotted over the proportion of false null hypotheses $\pi_1$.}\label{figure_simulation_platform_N50}
\end{figure}

As can be seen in Fig.~\ref{figure_simulation_platform_N50}, all procedures besides Adaptive-Spending control the familywise error rate. The latter has a moderate error inflation, when $\pi_1$ is smaller than $0.6$. It is noticeable that the procedures (especially  Continuous-Graph and  Continuous-Spending) become quite conservative when the proportion of true alternatives is high. One possible reason for this is, that the number of actually tested hypotheses in this set-up is comparatively small and  a considerable proportion of the regained test level is saved for future hypotheses. This might be counteracted by a different choice of the hyperparameters $(\gamma_i),(g_{j,i})$ and $f$.      
Like in the first setting, we further see that Adaptive-Spending and the Continuous-Graph are the most powerful procedures, after that comes the Continuous-Spending and Online-Fallback is again the least powerful method by far.

\section{Discussion}\label{section_discussion}
Existing advanced online multiple testing procedures only guarantee familywise error rate control under the assumption of either independent or at most locally dependent $p$-values. In practice, independence is an unrealistic assumption, and even methods that work for local dependence can not be applied in some situations, e.g., when the dependence of $p$-values cannot be linked to the intervening time alone or when there is little or no information about the dependency structure at all.    
In this work we proposed a way to derive online procedures that asymptotically control the familywise error rate, regardless of the underlying dependency structure.  To this end, we established general rules for the derivation of new procedures and gave concrete examples. Moreover, we showed how to obtain further improvements of those procedures by using the online closure principle. In a simulation study, we showed that the asymptotic results are also satisfactory for a moderate finite sample size. 

It is worth noting that all proposed procedures, excluding the closed procedures in Section\,\ref{subsection_closed_testing},  effectively control the expected number of false rejections. So they can as well be used to obtain control of the $k$-FWER, i.e., the probability of making at least $k$ false rejections, at target level $\alpha$ by choosing the nominal level for the procedures as $k\alpha$ (see also the discussion in \citet{tian2021online}). It might be worthwhile to investigate whether this new approach of dealing with dependency can be applied to further error rates, as for example the (marginal) false discovery rate.

One current shortcoming is that, although the most powerful procedures developed for the case of independent (or locally dependent) $p$-values use a combination of adaptivity and discarding, only the concept of adaptivity could be transferred to this new setting so far. This has the consequence that the proposed procedures for dependent $p$-values will lose power when the $p$-values are conservative, i.e., being stochastically larger than a standard uniform random variable.
Interesting is that \citet{robertson2023online} did not detect any error inflation of the ADDIS-Spending procedure, which grants formal familywise error rate control only for independent $p$-values, while we saw in Section\,\ref{section_simulations} that the Adaptive-Spending exceeds the nominal test level in a similar simulation set-up. So, it seems that the additional discarding not only improves the power but also somehow helps to maintain the nominal test level in this case. It might be interesting to investigate how this would behave if the $p$-values were negatively dependent. In any case, it would be desirable to incorporate any form of discarding or another method for dealing with conservative $p$-values into the new framework.

Another task for the future would be the extension to more general test set-ups, as  only one-sided 1-sample and 2-sample hypothesis tests have been considered in this work. Similar resampling schemes can probably be used to obtain consistent weights in different situations, e.g., regression models. 

In addition, there is still interest in developing more efficient methods for testing an initially unknown but finite number of hypotheses. Improvements could be expected here if the design of the procedures takes into account that ultimately only finitely many hypotheses are tested.

\clearpage

\section*{Appendix}\label{section_appendix}
\subsection*{Deferred proofs and derivations}

\begin{proof}[Proof of Theorem~\ref{theorem_finite_sample_size_control}]
As in the proof of Theorem~\ref{thm_main_error_control}, we will effectively show that $\E(V_n)\leq\alpha$ and thus, in particular, the familywise error rate is controlled.

By definition, it holds that
\begin{align*}
    \mathbb{E}(\xi_{i,n_i})&=\mathbb{E}\left[\PB \{P_{i,m(n_i)}^*>\lambda_i\mid\mathcal{X}_i\}\right]\\
    &=\PB\{\hat{\theta}_{i,m(n_i)}^*\leq s_{i,m(n_i)} \Phi^{-1}(1-\lambda_i)\}.
\end{align*}

Under the null hypothesis, the unconditional distribution of $\hat{\theta}_{i,m(n_i)}^*$ is a normal distribution with expectation 0 and variance $s_{i,n_i}^2+s_{i,m(n_i)}^2$. 
Since $\lambda_i$ is assumed to be larger or equal than $0.5$ (and thus $\Phi^{-1}(1-\lambda_i)\leq 0$), it holds that $\E (\xi_{i,n_i})\geq 1-\lambda_i$ for all $i\geq 1$. This implies $\mathbb{E}(\frac{\xi_{i,n_i}}{1-\lambda_i}|\mathcal{F}^{i-1})\geq 1$ almost surely, because $\xi_{i,n_i}$ is independent of $\mathcal{F}^{i-1}:=\sigma (\mathcal{X}_1,\dots ,\mathcal{X}_{i-1})$.

It follows that
    \begin{align*}
        \mathbb{E}(V_n)&=\sum_{i\in\mathcal{H}^0}\mathbb{E}\{\mathbb{E}(\1\{P_{i,n_i}\leq \alpha_{i,n}\}|\mathcal{F}^{i-1})\}\\    
        &\leq \sum_{i\in\mathcal{H}^0}\mathbb{E}(\alpha_{i,n})\\
        &\leq \sum_{i\in\mathcal{H}^0}\mathbb{E}\left\{\alpha_{i,n}\mathbb{E}\left(\frac{\xi_{i,n_i}}{1-\lambda_i}|\mathcal{F}^{i-1}\right)\right\}\\
        &= \sum_{i\in\mathcal{H}^0}\mathbb{E}\left(\alpha_{i,n_i}\frac{\xi_{i,n_i}}{1-\lambda_i}\right)\\
        &\leq \sum_{i\in\mathbb{N}}\mathbb{E}\left(\alpha_{i,n_i}\frac{\xi_{i,n_i}}{1-\lambda_i}\right)\\
        &\leq\alpha,
        \end{align*}
where the first inequality holds due to the independence assumption and in the second equality it is used that $\alpha_{i,n}$ is measurable with respect to $\mathcal{F}^{i-1}$.
\end{proof}

\begin{proof}[Proof that the continuous Adaptive-Graph fulfills  condition \eqref{def_general_dependent_algorithm}] 
	Since no asymptotic arguments are required within this proof, we omit the index $n$ throughout. We follow the corresponding proof for the ADDIS-Graph in \citet{fischer2023adaptive}. Let $(\alpha_i)_{i\in\N}$ be the test levels of the continuous Adaptive-Graph as defined  in \eqref{def_test_levels_dependent_graphgeneral_case}. We need to show that for every $i\in\N$ it holds that 
\[\sum_{j=1}^i\frac{\alpha_{j}}{1-\lambda_j}\xi_{j}\leq\alpha.\]
We fix an $i\in\N$ and define $U_j=1-\xi_j$ for $j\in\{1,\ldots ,i\}$. Then the claim is equivalent to  
\begin{equation}\label{eqn_appendix_graph_proof}
	F_i(U_{1:i})=\sum_{j=1}^{i}\left\{\alpha\gamma_j + \sum_{k=1}^{j-1}g_{k,j}U_k\alpha_k (U_{1:k-1})(1-\lambda_k)^{-1}\right\} (1-U_j)\leq\alpha
\end{equation}
for any $U_{1:i}=(U_1,\ldots ,U_i)^T\in [0,1]^{i}$.
First, consider the special case $U_{1:i}^0=(0,\ldots ,0)^T\in [0,1]^{i}$, which obviously satisfies \eqref{eqn_appendix_graph_proof}, since $F_i(U_{1:i}^0)=\sum_{j=1}^{i}\alpha\gamma_j\leq\alpha$. Now, let $U_{1:i}=(U_1,\ldots ,U_i)^T\in [0,1]^{i}$ be arbitrary but fixed. The plan is to show that $F_i(U_{1:i})\leq F_i(U_{1:i}^0)$, which would conclude the proof. Without loss of generality, we assume that $U_j> 0$ for at least one $j\in\{1,\ldots ,i\}$ as we have already dealt with the case of all entries being zero. Furthermore, define $l=\max\{j\in\{1,\ldots ,i\}:U_j>0\}$ and consider $U_{1:i}^l=(U_1^l,\ldots ,U_i^l)^T$, where $U_j^l=U_j$ if $j\neq l$ and $U_l^l=0$. For ease of notation we write from now on $\alpha_j$ and $\alpha_j^l$ instead of $\alpha_j(U_{1:j-1})$ and $\alpha_j(U_{1:j-1}^l)$, respectively. By the definition of $l$ and $U_{1:i}^l$, for every $j\leq i$ we have that $U_{j}^l=0$ for $j\geq l$, $U_{j}=0$ for $j\geq l+1$ and $\alpha_j^l=\alpha_j$ for $j\leq l$.
Thus, we obtain
\begin{align*}
	F_i(U_{1:i}^l)-F_i(U_{1:i})&=\sum_{j=1}^{i}\alpha\gamma_j (U_j-U_j^l)+\sum_{j=1}^{i}\left\{\sum_{k=1}^{j-1}g_{k,j}U_k^l\alpha_k^l(1-\lambda_k)^{-1}\right\}(1-U_j^l)\\
	& ~~~~ -\sum_{j=1}^{i}\left\{\sum_{k=1}^{j-1}g_{k,j}U_k\alpha_k(1-\lambda_k)^{-1}\right\}(1-U_j)\\
	&=\alpha\gamma_lU_l+\sum_{k=1}^{l-1}g_{k,l}U_k^l\alpha_k^l(1-\lambda_k)^{-1} - (1-U_l)\sum_{k=1}^{l-1}g_{k,l}U_k\alpha_k(1-\lambda_k)^{-1} \\
	& ~~~~ +\sum_{j=l+1}^{i}\left\{\sum_{k=1}^{j-1}g_{k,j}U_k^l\alpha_k^l(1-\lambda_k)^{-1}\right\}-\sum_{j=l+1}^{i}\left\{\sum_{k=1}^{j-1}g_{k,j}U_k\alpha_k(1-\lambda_k)^{-1}\right\}\\
	&=\alpha\gamma_lU_l + U_l\sum_{k=1}^{l-1}g_{k,l}U_k\alpha_k(1-\lambda_k)^{-1} - \sum_{j=l+1}^{i}g_{l,j}U_l\alpha_l(1-\lambda_l)^{-1}\\
	&\geq U_l\left\{\alpha\gamma_l + \sum_{k=1}^{l-1}g_{k,l}U_k\alpha_k(1-\lambda_k)^{-1} - \alpha_l(1-\lambda_l)^{-1}\right\}\stackrel{\eqref{def_test_levels_dependent_graphgeneral_case}}{=}0,
\end{align*}
where the inequality holds, because $\sum_{j>l}g_{l,j}\leq 1$ and $U_l,\alpha_l,(1-\lambda_l)^{-1}\geq 0$. So, we have $F_i(U_{1:i})\leq F_i(U_{1:i}^l)$. Iterating this process, it follows that $F_i(U_{1:i})\leq F_i(U_{1:i}^0)$, which completes the proof.
\end{proof}

\begin{proof}[Proof that the continuous Adaptive-Graph is a generalization of the procedure in Example \ref{example_procedure_uniform}]
 Let $\lambda_i,\Pi\in (0,1)$, $\gamma_i=\Pi (1-\Pi)^{i-1}, i\in\N$, and $g_{j,i}=\gamma_{i-j}$ for all $j\in\N,i>j$.  Further, let $\alpha_{i,n}^{\text{Graph}}$ and $\tilde{\alpha}_{i,n}$ denote the test levels of the continuous Adaptive-Graph \eqref{def_test_levels_dependent_graphgeneral_case} and the geometric spending algorithm \eqref{eqn_test_levels_canonic_proc}, respectively. We give a proof by induction. By definition it holds that 
\[\alpha_{1,n}^{\text{Graph}}=(1-\lambda_1)\gamma_1\alpha=(1-\lambda_1)\Pi\alpha=\tilde{\alpha}_{1,n}.\]

Now assume that $\alpha_{i,n}^{\text{Graph}}=\tilde{\alpha}_{i,n}$ holds for an $i\in\N$. Then,
\begin{align*}
	\alpha_{i+1,n}^{\text{Graph}}&=(1-\lambda_{i+1})\left\{\gamma_{i+1}\alpha + \sum_{j=1}^{i}\gamma_{i-j+1}(1-\xi_{j,n})\frac{\alpha_{j,n}^{\text{Graph}}}{1-\lambda_j}\right\}\\
	&= (1-\lambda_{i+1}) \left\{(1-\Pi)\gamma_i\alpha +  \Pi(1-\xi_{i,n})\frac{\alpha_{i,n}^{\text{Graph}}}{1-\lambda_i} + (1-\Pi)\sum_{j=1}^{i-1}\gamma_{i-j}(1-\xi_{j,n})\frac{\alpha_{j,n}^{\text{Graph}}}{1-\lambda_j}  \right\}\\
	&= (1-\lambda_{i+1}) \left\{ \Pi(1-\xi_{i,n})\frac{\alpha_{i,n}^{\text{Graph}}}{1-\lambda_i} + (1-\Pi)\frac{\alpha_{i,n}^{\text{Graph}}}{1-\lambda_i}  \right\}\\
	&= (1-\lambda_{i+1})\frac{\alpha_{i,n}^{\text{Graph}}}{1-\lambda_i}(1-\Pi\xi_{i,n})\\
	&= (1-\lambda_{i+1})\frac{\tilde{\alpha}_{i,n}}{1-\lambda_i}(1-\Pi\xi_{i,n})\\
	&\stackrel{\eqref{def_testlevel_uniform_recursion}}{=}\tilde{\alpha}_{i+1,n}.
\end{align*}
\end{proof}

\begin{proof}[Proof of Lemma~\ref{lemma_continuous__spending_test_levels_conservative}]
Let $j_1<\cdots<j_k$ denote the indices corresponding to the true null hypotheses among the first $N$ hypotheses and let $0<\varepsilon <1-\lambda$. Due to \eqref{eqn_assumption_xi_consistent} there exists an $n_0\in\N$ such that for all $n\geq n_0$ we have $\PB(1-\lambda-\xi_{i,n}\geq\varepsilon)\leq \varepsilon/k$ for all $i\in\mathcal{H}^0=\mathcal{H}^0(N)$.
For the test levels defined in \eqref{def_test_level_continuous_spending} it holds that
\begin{align*}
	\E\left(\sum_{i\in\{1,\ldots ,k\}}\alpha_{j_i,n}\right)&\leq\E\left\{\frac{\alpha}{s}  \sum_{i\in\{1,\ldots ,k\}}(1-\lambda)f \left(1 + \sum_{l<i}\xi_{j_l,n}\right)\right\}\\
	&=\frac{\alpha}{s} \E\left\{(1-\lambda)f (1)+ \sum_{i\in\{2,\ldots ,k\}}\1\{\xi_{j_{i-1},n}>1-\lambda-\varepsilon\}(1-\lambda)f \left(1 + \sum_{l<i}\xi_{j_l,n}\right)\right\}\\
	&\phantom{{}=1} + \frac{\alpha}{s} \E\left\{ \sum_{i\in\{2,\ldots ,k\}}\1\{\xi_{j_{i-1},n}\leq 1-\lambda-\varepsilon\}(1-\lambda)f \left(1 + \sum_{l<i}\xi_{j_l,n}\right)\right\}\\
	&\leq\frac{\alpha}{s} \E\left\{(1-\lambda)f(1)+\sum_{i\in\{2,\ldots ,k\}}\1\{\xi_{j_{i-1},n}>1-\lambda-\varepsilon\}\frac{1-\lambda}{\xi_{j_{i-1},n}}\int_{1  + \sum_{l<i-1}\xi_{j_l,n}}^{1 + \sum_{l<i}\xi_{j_l,n}}f (x)dx\right\}\\
	&\phantom{{}=1} + \frac{\alpha}{s} (1-\lambda) f(1)\sum_{i\in\{2,\ldots ,k\}}\PB(\xi_{j_{i-1},n}\leq 1-\lambda-\varepsilon)\\
	&\leq\frac{\alpha}{s}  \E\left\{(1-\lambda)f(1)+\sum_{i\in\{2,\ldots ,k\}}\frac{1-\lambda}{1-\lambda-\varepsilon}\int_{1  + \sum_{l<i-1}\xi_{j_l,n}}^{1 + \sum_{l<i}\xi_{j_l,n}}f (x)dx\right\}\\
	&\phantom{{}=1} + \alpha \sum_{i\in\{2,\ldots ,k\}}\frac{\varepsilon}{k}\\
	&\leq\frac{\alpha}{s} \E\left\{\frac{1-\lambda}{1-\lambda-\varepsilon} (1-\lambda)f(1)+\sum_{i\in\{2,\ldots ,k\}}\frac{1-\lambda}{1-\lambda-\varepsilon}\int_{1  + \sum_{l<i-1}\xi_{j_l,n}}^{1 + \sum_{l<i}\xi_{j_l,n}}f (x)dx\right\} + \varepsilon\\
	&\leq \frac{\alpha(1-\lambda)}{s(1-\lambda-\varepsilon)} \E\left\{(1-\lambda)f(1)+\int_{1}^{\infty}f (x)dx\right\} + \varepsilon\\
	&=\alpha \frac{1-\lambda}{1-\lambda-\varepsilon}+\varepsilon.
\end{align*}

Here, in the first and second inequality it is used, that $f$ is non-increasing. 
As $\varepsilon$ was arbitrary, we conclude $\lim_{n\rightarrow\infty}\E\left(\sum_{i\in\{1,\ldots ,k\}}\alpha_{j_i,n}\right)\leq\alpha$. Since $\alpha_{i,n}\to \alpha_i$ in probability, we also have that $\sum_{i\in\mathcal{H}^0}\alpha_{i,n}\to \sum_{i\in\mathcal{H}^0}\alpha_i$ in probability. The dominated convergence theorem for convergence in probability (cf. \cite{bogachev2007measure}, Theorem 2.8.5) plus the fact that $\sum_{i\in\mathcal{H}^0}\alpha_{i,n}\leq k$ then yield \[\alpha\geq\E\left(\sum_{i\in\mathcal{H}^0}\alpha_i\right)=\sum_{i\in\mathcal{H}^0}\alpha_i.\]
This concludes the proof.
\end{proof}

\emph{Computation of the constant for the interpolation procedure \eqref{def_test_level_interpolation}}.
We compute
\begin{align*}
	\int_1^\infty f_\gamma (x)dx &= \sum_{k=1}^\infty \int_k^{k+1}f_\gamma (x)dx\\
	&=\sum_{k=1}^\infty \int_k^{k+1} \left\{\gamma_k + (x-k)(\gamma_{k+1}-\gamma_k)\right\}dx\\
	&=\sum_{k=1}^\infty \left\{\gamma_k + \int_0^{1} x(\gamma_{k+1}-\gamma_k)dx\right\}\\
	&= \frac{1}{2}\left(\sum_{k=1}^\infty \gamma_k + \sum_{k=2}^\infty \gamma_k\right)\\
	&=1-\frac{\gamma_1}{2}.
\end{align*}
Since $f_\gamma(1)=\gamma_1$, we have $s=(1-\lambda)f_\gamma(1)+ \int_1^\infty f_\gamma(x)dx=1+\gamma_1(1/2-\lambda)$.

\emph{Derivation of the closed procedures \eqref{def_test_levels_dependent_graph_closed} and \eqref{def_test_level_continuous_spending_closed}}.
Based on the test levels \eqref{def_test_levels_dependent_graphgeneral_case}, for every $I\subseteq \N$ we consider the intersection test that rejects the intersection hypothesis $H_I=\bigcap_{i\in I} H_i$ if and only if for at least one $i\in I$ it holds that
\[P_{i,n}\leq\alpha_{i,n}^{I}=(1-\lambda_i)\left\{\alpha\gamma_i + \sum_{j\in I,j<i}g_{j,i}(1-\xi_{j,n})\frac{\alpha_{j,n}^{I}}{1-\lambda_j} + \sum_{j\notin I,j<i}g_{j,i}\frac{\alpha_{j,n}^{(I\cup \{j\})}}{1-\lambda_j} \right\}.\]
Note that  $\alpha_{1,n}^{I}=(1-\lambda_1) \alpha\gamma_1$ if $1\in I$ and all further test levels are defined recursively. 
It is easy to see, that this defines a predictable and consonant intersection test (with the terminology as in \citet{fischer2022online}), since the test levels depend only on previous indices and for any $i\in I$ it holds that $\alpha_{i,n}^{I}\leq \alpha_{i,n}^J$ for all $J\subseteq I$ with $i\in J$. Moreover, it is an (asymptotical) level $\alpha$ test, as it fulfills \eqref{def_general_dependent_algorithm} (with the summation only taking place over the index set $I$), because the original procedure \eqref{def_test_levels_dependent_graphgeneral_case} does and the case $j\notin I$ can be interpreted as $\xi_{j,n}=0$. Now, we recursively define $I_1=\{1\}$ and $I_i=\{j\in\N:j<i,P_{j,n}>\alpha_{j,n}^{I_j}\}\cup \{i\}$ for $i\geq 2$. The short-cut procedure according to \citet{fischer2022online} is then given by
\begin{align*}\alpha_{i,n}=\alpha_{i,n}^{I_i}=(1-\lambda_i)\left(\alpha\gamma_i + \sum_{j<i}g_{j,i}\max\{1-\xi_{j,n},R_{j,n}\}\frac{\alpha_{j,n}}{1-\lambda_j}  \right).
\end{align*}

Analogous, the test levels for the intersection tests based on the procedure \eqref{def_test_level_continuous_spending} are defined as 
\[\alpha_{i,n}^{I}=\alpha \frac{1 - \lambda}{s} f \left(1 + \sum_{j\in I,j<i}\xi_{j,n}\right).\]
Again, it is easy to verify that these intersection tests are predictable and consonant. Furthermore, they are asymptotical level $\alpha$ tests, as can be seen by slightly modifying the proof of Lemma \ref{lemma_continuous__spending_test_levels_conservative}. With $I_i$ as above, the short-cut procedure is defined as 
\begin{align*}\alpha_{i,n}=\alpha_{i,n}^{I_i}=\alpha \frac{1 - \lambda}{s} f \left\{1 + \sum_{j<i}\xi_{j,n} (1-R_{j,n})\right\}.
\end{align*}

\emph{Obtaining consistent weights in a 2-sample design}.
Assume that the observed data $\mathcal{X}$ consists of independent random variables $X_{1},\ldots ,X_{n_1}\sim\mathcal{N}(\mu^X,1),Y_{1},\ldots ,Y_{n_2}\sim\mathcal{N}(\mu^Y,1)$ and we want to test the hypothesis $H:\mu^X-\mu^Y \leq 0$ versus the alternative $H':\mu^X-\mu^Y>0$. Here, $n_1=n_1(n)$ and $n_2=n_2(n)$ are the sample sizes for the respective groups, which depend implicitly on the overall asymptotic parameter $n$. The test statistic and the corresponding one-sided $p$-value are \mbox{$Z_{n}=(1\slash n_1 + 1\slash n_2)^{-1/2}(\overline{X}_{n_1}-\overline{Y}_{n_2})$} and $P_{n}=1-\Phi (Z_{n})$, respectively. 
Similar to Example \ref{example_low_intensity} consider the parametric bootstrap variables $X_{1}^*,\ldots ,X_{m(n_1)}^*,Y_{1}^*,\ldots ,Y_{m(n_2)}^*$, which are independently drawn from the distributions
$\mathcal{N}(\overline{X}_{n_1},1)$ \text{ and } $\mathcal{N}(\overline{Y}_{n_2},1)$,
respectively.
As before, $m:\N\to\N$ is a non-decreasing map such that $m(n)\to\infty$ and $m(n)/n\to 0$ as $n\to\infty$. 
Furthermore, we construct the (low intensity) bootstrap test statistic  
\[Z_{m(n)}^*=\{1\slash m(n_1) + 1\slash m(n_2)\}^{-1/2}\left\{m(n_1)^{-1}\sum_{j=1}^{m(n_1)}X_{j}^* - m(n_2)^{-1}\sum_{j=1}^{m(n_2)}Y_{j}^*\right\},\]
whose conditional distribution is a normal distribution with variance 1 and mean 
\[\frac{(1\slash n_1 + 1\slash n_2)^{1/2}}{\{1\slash m(n_1) + 1\slash m(n_2)\}^{1/2}}Z_{n}.\]
Moreover, we define the low intensity bootstrap version of the $p$-value $P_{n}$ by $P_{m(n)}^*=1-\Phi \{Z^*_{m(n)}\}$.
For any $\lambda\in (0,1)$, we can now define the consistent weights by $\xi_n=\PB\{P_{m(n)}^*>\lambda|\mathcal{X}\}$. The consistency condition \eqref{eqn_assumption_xi_consistent} can be shown completely analogous to Example \ref{example_low_intensity} if $n_1=n_2$. In general, it is necessary that the groups are at least asymptotically balanced, i.e. $n_1/n_2\to 1$ as $n\to\infty$. 

\subsection*{Further simulation results}

\begin{figure}[H]
\centering
\includegraphics[scale=0.75]{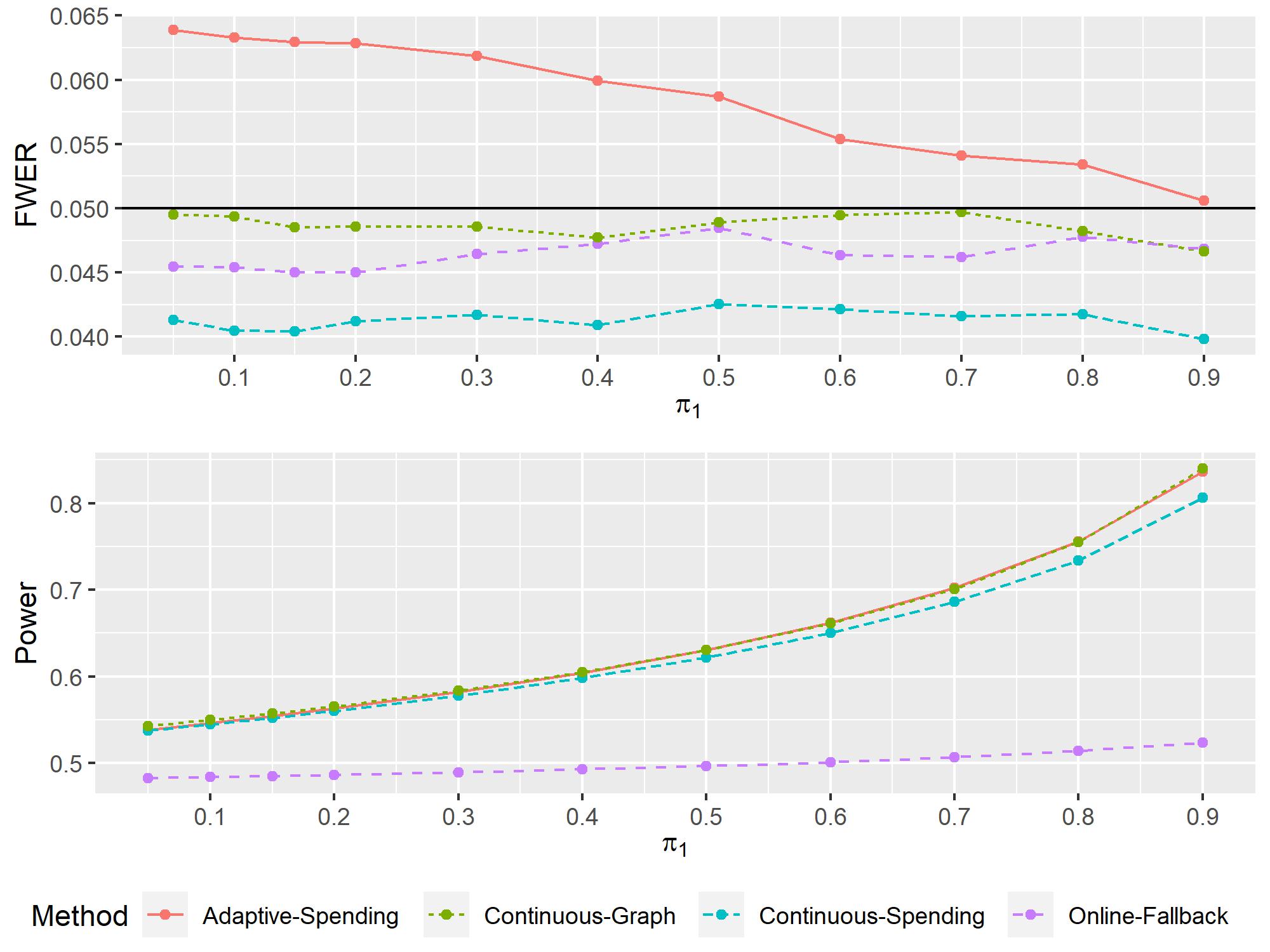}
\caption{Estimated FWER and power in an autocorrelation set-up with $\rho=0.5$ and target level $\alpha=0.05$ plotted over the proportion of false null hypotheses $\pi_1$.}
\end{figure}

\begin{figure}[H]
\centering
\includegraphics[scale=0.75]{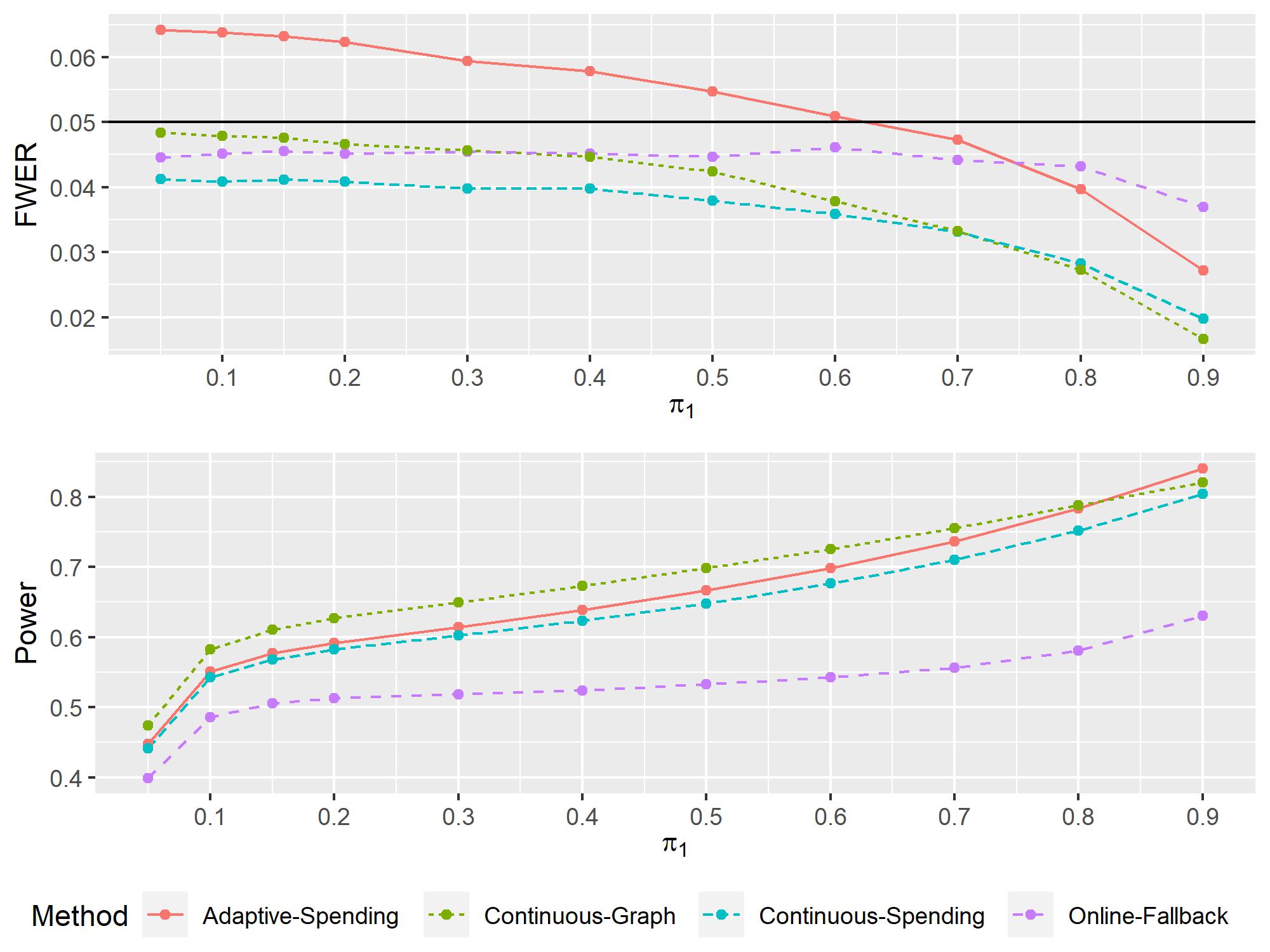}
\caption{Estimated FWER and power in a platform trial scenario with $N=30$ and target level $\alpha=0.05$ plotted over the proportion of false null hypotheses $\pi_1$.}
\end{figure}

\begin{figure}[H]
\centering
\includegraphics[scale=0.75]{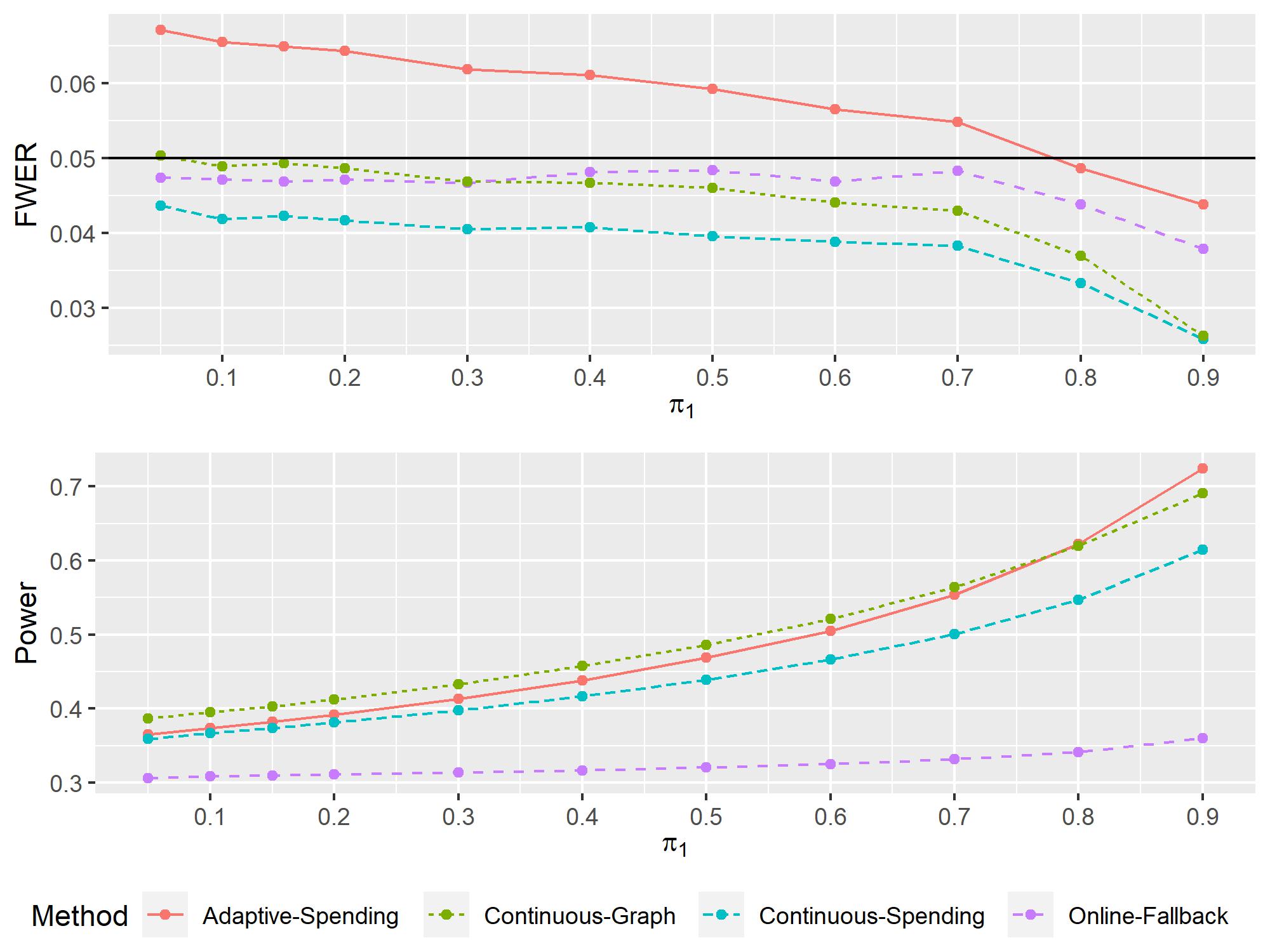}
\caption{Estimated FWER and power in a platform trial scenario with $N=100$ and target level $\alpha=0.05$ plotted over the proportion of false null hypotheses $\pi_1$.}
\end{figure}

\begin{figure}[H]
\centering
\includegraphics[scale=0.8]{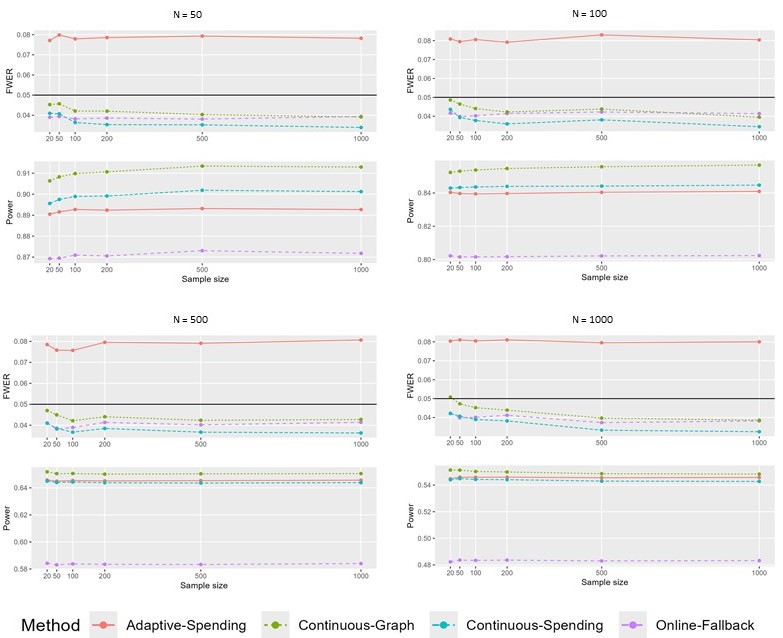}
\caption{Estimated FWER and power in an autocorrelation set-up with $\rho=0.8$, $\pi_1=0.1$ and target level $\alpha=0.05$ for $N\in\{50,100,500,1000\}$ hypotheses plotted over the sample size.}
\end{figure}

\section*{Funding}
Lasse Fischer acknowledges funding by the Deutsche Forschungsgemeinschaft (DFG, German Research Foundation) – Project number 281474342/GRK2224/2.

\section*{Supplementary Material}
The R source code used for the simulations is available under \href{https://github.com/vijankovic/Online-Procedures-Dependency}{https://github.com/vijankovic/Online-Procedures-Dependency}.

\bibliographystyle{plainnat} 

\bibliography{references}

\end{document}